\documentclass{article}
\usepackage{amsthm,amsmath,amssymb,amsfonts,braket,hyperref}
\usepackage{graphicx}
\usepackage{xcolor}

\newtheorem{theorem}{Theorem}
\newtheorem{corollary}{Corollary}[theorem]
\newtheorem{lemma}[theorem]{Lemma}
\newtheorem{definition}{Definition}

\begin{document}
\title{Quantum to Classical One-Way Function and Its Applications in Quantum Money Authentication\thanks{Part of this work was done while the first author was visiting R. C. Bose Centre for Cryptology and Security, Indian Statistical Institute, Kolkata during the Summer of 2017 (between the 2nd and the 3rd semester of his BSc (Honours) Mathematics and Computer Science course) for internship under the supervision of the second author.}
}
\author{Amit Behera\\
Department of Mathematics \& Computer Science,\\
Chennai Mathematical Institute, Chennai, India,\\ 
E-mail: \texttt{amitb@cmi.ac.in}
\and
Goutam Paul\\
Cryptology and Security Research Unit,\\
R. C. Bose Centre for Cryptology and Security,\\
India Statistical Institute, Kolkata, India,\\ 
E-mail: \texttt{goutam.paul@isical.ac.in}
}

\date{}

\maketitle

\begin{abstract}
 In 2013, Farid and Vasiliev [arXiv:quant-ph/1310.4922] for the first time proposed a way to construct a protocol for the realisation of ``{\em Classical to Quantum}" one-way hash function, a derivative of the Quantum one-way function as defined by Gottesman and Chuang [Technical Report
arXiv:quant-ph/0105032] and used it for constructing quantum digital signatures.
 We, on the other hand, for the first time, propose the idea of a different kind of one-way function, which is ``{\em quantum-classical}" in nature, that is, it takes an $n$-qubit quantum state of a definite kind as its input and produces a classical output. We formally define such a one-way function  and propose a way to construct and realise it. The proposed one-way function turns out to be very useful in authenticating a quantum state in any quantum money scheme and so we can construct many different quantum money schemes based on such a one-way function. Later in the paper, we also give explicit constructions of some interesting quantum money schemes like quantum bitcoins and quantum currency schemes, solely based on the proposed one-way function. The security of such schemes can be explained on the basis of the security of the underlying one-way functions. 
\end{abstract}

\noindent{\bf Keywords:} Measurement Positions, No-Cloning Theorem, One-way Functions, Quantum Bitcoins, Quantum Currency, Quantum to Classical OWF.

\section{Introduction}
The concept of one-way functions (OWF) is at the core of theoretical foundations of Cryptology. The one-way functions have served as a very useful primitive in many cryptographic protocols. In the post-quantum era, extensive work has been done in quantum one-way functions~\cite{qhash,qdsign} as well. Those one-way functions are ``{\em classical-quantum}" in nature i.e., they take classical bit strings as the input and produce quantum states as the output, satisfying the one-way property of a one-way function. Hence, many information-theoretically secured digital signatures scheme~\cite{qdsign,qps,qdsof} have been devised based on the one-wayness property of these functions. Although, those one-way functions are very effective in authenticating classical bit strings but they fail to authenticate quantum states. For authentication of quantum states, we need a different kind of one-way function  which shall be $quantum$-$classical$ in nature. In our paper, we conceive the idea of such a one-way function  which takes a quantum state (of a particular kind) as its input and produces a classical string, with the help of which quantum states can be authenticated multiple times. 

The biggest difficulty or challenge in realising such a one-way function  is that any quantum transformation of quantum states can be represented by a unitary transformation and hence are linear and invertible. It is not possible to implement a non-linear function by quantum gates. Therefore, with the knowledge of the output state $U\Ket{\psi}$ of any quantum operation $U$, one can trace back to the original input quantum state $\Ket{\psi}$. Hence, any function developed through these operations will become invertible, provided the complete output state is revealed. Instead, if we don't reveal the whole output state $U\Ket{\psi}$, but reveal only partial information about the output state and use this partial information to construct our classical image of the function for the state $\Ket{\psi}$, then the procedure might not be invertible. This is the central idea behind the construction of the proposed one-way function, which we shall formalise later in this paper.

\section{OWF Protocol}
First, we introduce a few relevant notations and definitions and then describe how our OWF can be evaluated and verified.

\subsection{Notations and  Definitions}

Let $\mathcal{C}$ denote the computational basis consisting of two elements $\Ket {0}$ and $\Ket{1}$; and let $\mathcal{H}$ denote the Hadamard basis consisting of two elements $\Ket{+} = \frac{1}{\sqrt{2}}(\Ket{0} + \Ket{1})$ and $\Ket{-} = \frac{1}{\sqrt{2}}(\Ket{0} - \Ket{1}).$

Let ${\mathcal{B}}_n$ $(n>1)$ be the orthonormal basis of $n$-qubit GHZ states whose elements are denoted as
\begin{eqnarray*}
\Ket{y_1,y_2,\ldots,y_n} & = & \frac {1}{\sqrt{2}} (\Ket{y_1}_1 \otimes \Ket{y_2}_2 \otimes \ldots \otimes \Ket{y_n}_n\\
& & + \Ket{{\overline{y}}_1}_1 \otimes \Ket{{\overline{y}}_2}_2 \otimes \ldots \otimes \Ket{{\overline{y}}_n}_n),
\end{eqnarray*}
where $y_i \in \{0,1\} \quad \forall i = 1,2,\ldots,n$ and $\overline{y}$ denotes one's complement of $y$ in binary base. The subscript outside the Ket notation denotes the position of the qubit.

We say, the size of any GHZ state in ${\mathcal{B}}_n$ is $n$.
Let $\mathcal{G}$ be the set of all GHZ states with size less than equal to $n$\footnote{The set $\mathcal{G}$  depends upon $n$ and should be written as $\mathcal{G}(n)$. Since we would be using  $\mathcal{G}(n)$ for a fixed $n$ throughout the paper, we simply refer to the set as $\mathcal{G}$.}.

Next, we will define a special kind of quantum states, (termed as $GCH$ states)  that we will be mostly concerned with. Let $\Ket {g} \in \mathcal{G}$,  $\Ket{c} \in \mathcal{C}$, $\Ket{h} \in \mathcal{H}$. The state $\Ket{g}\otimes \Ket{c}\otimes \Ket{h}$ is called a $GCH$ state. More formally,
\begin{definition}
\label{gch}
An $n$-qubit quantum state, which can be written as a tensor product of either elements of Hadamard basis ($\mathcal{H}$) or of computational basis ($\mathcal{C}$) or of GHZ states ($\mathcal{G}$), up to some rearrangement of the indices of the qubits, is called a $GCH$ state and it is denoted by $\Ket{\psi^{(n)}}_{GCH}$. We say that the \textbf{basis} of a particular qubit of a $GCH$ state is $\mathcal{H}$ or $\mathcal{C}$ or $\mathcal{G}$, if the quantum state of that qubit is an element of  $\mathcal{H}$ or $\mathcal{C}$ or is entangled in a GHZ state ($\mathcal{G}$).
\end{definition}

\begin{definition}
A pair of qubits, say the $k^{th}$ and $l^{th}$ qubits, is said to be \textbf{compatible for CNOT gate operation} such that the $l^{th}$ qubit is used as the control qubit and the $k^{th}$ as the target qubit, iff it is exactly one of the following cases.
         \begin{enumerate}
            \item Both the qubits are in $\mathcal{C}$ or $\mathcal{H}$ or are entangled in a GHZ state (of $\mathcal{G}$).
            \item The $l^{th}$ and the $k^{th}$ qubits are in $\mathcal{H}$ and $\mathcal{C}$ or $\mathcal{C}$ and $\mathcal{H}$ or $\mathcal{G}$ and $\mathcal{C}$ or $\mathcal{C}$ and $\mathcal{G}$ or $\mathcal{G}$ and $\mathcal{H}$ respectively.
         \end{enumerate}
\end{definition}

\begin{definition}
A parallel combination of CNOT gates on compatible pairs of qubits of a $GCH$ state is known as \textbf{OWF gate operation}.
\end{definition}

\begin{definition}
A sequence of \textbf{OWF gate operations} applied in series, is called a \textbf{OWF unitary operation}.
\end{definition}

For an $n$-qubit state, the quantum state of the qubits at each position or index keeps on changing with the gate operations but the positions remain fixed. Now we will define some properties for these positions, that would be useful in describing our protocol.

\begin{definition}
After a \textbf{OWF unitary operation} on an $n$-qubit quantum $GCH$ state, the $k^{th}$ position is said to be \textbf{saturated} (otherwise, it is called \textbf{unsaturated}) if the qubit in the $k^{th}$ position has been used (either as a target or a control qubit) in at least one of the CNOT operations of the \textbf{OWF unitary operation}.  
\end{definition}

Now we will define some of the properties of pairs of positions in a $GCH$ state with respect to \textbf{OWF unitary operation}. Note that here pairs mean unordered pairs.

\begin{definition}
After a \textbf{OWF unitary operation} on an $n$-qubit quantum $GCH$ state, a pair of positions $(k,l)$ in that $GCH$ state is called \textbf{CNOT-once}, if there was at most one CNOT operation on the qubits at this pair of positions either with the $k^{th}$ qubit as the control qubit and the $l^{th}$ qubit as the target qubit or vice versa.
\end{definition}

\begin{definition}
In a \textbf{OWF unitary operation} on a $GCH$ state, a \textbf{CNOT-flipped operation} (abbreviated as \textbf{CF}) on a pair of \textbf{CNOT-once} positions $(k,l)$ in a $GCH$ state is a CNOT operation on the qubits at these positions such that, if there was no CNOT operation with the $k^{th}$ and the $l^{th}$ qubits as inputs, then anyone of them can be the target qubits, and if previously there was a CNOT operation with the $k^{th}$ and the $l^{th}$ qubit as the target and control qubits, then the target and control qubits would be the $l^{th}$ and the $k^{th}$ qubit respectively.
\end{definition}
Note that, we have only considered \textbf{CNOT-once} positions and hence previously there could have been at most one CNOT operation on the qubits at those positions.

Next, we define a special kind of \textbf{OWF gate operation} that will terminate a \textbf{OWF unitary operation}. The OWF termination involves $\frac{n}{2}$ CNOT operations, and is applicable only on even-sized $GCH$ state having only even-sized GHZ states along with some additional properties. Hence, if there are odd number of qubits in $\mathcal{C}$, then the number of qubits in $\mathcal{H}$ is also odd.

\begin{definition}
The \textbf{OWF termination} is a special kind of \textbf{OWF gate operation}, acting on an $n$-qubit $GCH$ state ($n$ is even) having only even-sized GHZ states (and some other properties, stated below), that terminates a \textbf{OWF unitary operation}, after which some of the qubits are chosen as marked qubits. The CNOT operations on compatible pairs of positions are chosen in the following way.
\begin{enumerate}
\item  For each of the GHZ state (having even size), which is an element of $\mathcal{B}_m$, group the positions of the qubits in the GHZ state in pairs and apply a \textbf{CF} operation on each of the pairs of positions. If $m = 2$, mark the position (index) of the qubits that were used as the control qubits in these CNOT gate operations; else mark the positions of the qubits that were used as the target qubits.
\item If there are odd number of qubits in $\mathcal{C}$, choose two qubits, one from $\mathcal{C}$ and other one from $\mathcal{H}$, and apply a \textbf{CF} operation at the positions of these qubits. After this step, each of the remaining number of qubits in $\mathcal{C}$ or $\mathcal{H}$ would be even. 
\item For the remaining qubits, group the positions of the qubits in $\mathcal{C}$ in pairs, and on each pair of positions apply a \textbf{CF} operation. Mark the position (index) of the qubits used as the target qubit in these CNOT gates. Do the same for the qubits in $\mathcal{H}$, with a slight change that instead of marking the position of the target qubits mark the position of the control qubits.
\end{enumerate}
\end{definition}

The $GCH$ state must satisfy that any such pairs of positions chosen as stated in points 1. 2. and 3. of the above definition (for \textbf{CF} operations) are \textbf{CNOT-once}. Otherwise, the \textbf{CF} operations do not make sense. This is an important property that the $GCH$ state must satisfy in order to undergo \textbf{OWF termination}.

\begin{definition}
 The \textbf{quantum to classical one-way function} $\mathcal{F}$ is an algorithm that maps an $n$-qubit quantum $GCH$ state (for any $n$) to an $l$-bit classical string.

 Hence, 
$$ \mathcal{F} : \left\{\Ket {\psi^{(n)}}_{GCH}\right\} \rightarrow \{0, 1\}^{l},$$
where $\Ket {\psi^{(n)}}_{GCH}$ is as defined in Definition~\ref{gch}.
\end{definition}


Now we will define negligible functions and success probability of an adversary. By an adversary, we mean an algorithm to execute forgery against a protocol and by an efficient adversary, we refer to those algorithms that run in polynomial time. Henceforth, all adversaries referred in this paper will be considered as efficient adversaries.

\begin{definition}
A function $f(x)$ is said to be \textbf{negligible}, if for every polynomial function $P(x)$, $$\exists N\in\mathbb{N}\text{ such that } \forall n > N,
f(n) < \frac{1}{P(n)}.$$
\end{definition}

\begin{definition}
For the OWF protocol with public parameters $P$ (defined in the OWF evaluation section), for any arbitrary input state $\Ket{\psi}$ in the domain $S$, and an efficient adversary $A$, we define \textbf{the success probability of forgery } for $A$ as $$Succ(A) = \Pr\left(\mathcal{F}\Ket{\psi}=\mathcal{F}\Ket{\phi}~\left.\right|~ A(P) = \Ket{\phi}\right),$$ 
\end{definition}
where $A(P)$ denotes the output of the adversary with public parameters as inputs.

In various cryptographic models and schemes, the classical one-way function is used for data authentication because of their one-wayness. In this paper also, we will be chiefly concerned about the use of the proposed quantum one-way function for data authentication. At this point, we should note a significant and fundamental difference between quantum and classical data. In the classical  setting, given classical data, we can always figure out the classical state of the data; but given an unknown quantum data, it is not possible to determine its actual quantum state, with $100\%$ certainty. 

In the authentication of classical data, in order to verify whether a given data $A$ is the same as $B$ (whose image $\mathcal{F}(B)$ under a given classical one-way function $\mathcal{F}$ is evaluated and known publicly), we evaluate $\mathcal{F}(A)$ and compare the two. Hence, in classical setting, verification through one-way function involves separate evaluation of the one-way function on the data bit strings $A$ and $B$ followed by comparison.
On the other hand, in the quantum setting, we need not know the actual quantum state of data $A$ and hence cannot repeat the evaluation algorithm on $A$ in the verification process.

For this reason, for the proposed quantum one-way function, we have included two algorithms, one for function evaluation and the other for verification.

\subsection{\textbf{OWF Evaluation}}
The OWF evaluation algorithm may be summed up in the following steps.
\begin{enumerate}
\item For an input $n$-qubit $GCH$ state $\Ket{\psi}$, a \textbf{OWF unitary operation} is constructed with $k$ many \textbf{OWF gate operations}, $X_1,X_2,\ldots,X_k$, in the following manner. 

 $$ \text{Let } \Ket {\phi_j} = X_j(X_{j-1}\ldots(X_2(X_1\Ket {\psi}))\ldots)$$ 
\hfill $\qquad \qquad \forall j = 1,2,\ldots k.$

\begin{enumerate}
 \item The \textbf{OWF gate operations} are chosen such that after the first $(k-2)$ gate operations, all possible pairs of positions in the $GCH$ state are \textbf{CNOT-once} [Theorem \ref{thm_1} of section III A ensures that such choices are possible].
 \item The value of $k$  is chosen sufficiently large so that each of the $n$ positions has been saturated after the first $(k-2)$ \textbf{OWF gate operations} [Lemma \ref{lemma_2} of section III A shows that we can do so without violating the previous property].
 \item For the $(k-1)^{th}$ gate, the CNOT gates are chosen in the following way.

For each of the GHZ state with odd number of qubits, choose any two qubits in the GHZ state and apply a \textbf{CF} operation on their positions. This makes sense as these pair of positions were \textbf{CNOT-once}. After this operation, there will be even number of qubits in each of the GHZ states.
 \item The final gate or the $k^{th}$ gate is a OWF termination operation, involving $\frac{n}{2}$ CNOT operations after which at least $(\frac{n}{2} - 1)$ qubits are marked. Note that the pairs of positions chosen during OWF termination are different from the the pairs of positions used in \textbf{CF} operations used in the $(k-1)^{th}$ gate. Hence the  pairs of positions chosen during OWF termination were \textbf{CNOT-once}. From the set of marked qubits, $(\frac{n}{2} - 1)$ qubits are randomly chosen, say, the ${m_1}^{th},{m_2}^{th},\ldots,{m_{(\frac{n}{2} - 1)}}^{th}$ qubits. The total combined quantum state ${\Ket {\alpha}}_{m_1,m_2,\ldots,m_{(\frac{n}{2} - 1)}}$ is noted down as a classical information $C_l$ and the position (in our case $m_1,m_2,\ldots,m_{(\frac{n}{2} - 1)}$) of the chosen marked qubits are noted down as the \textbf{\textit{measurement position}} of the \textbf{OWF unitary operation} $U_l$. The projective POVM's~\cite{qNC02} for each index $i$ of the measurement position (for measuring the $i^{th}$ qubit of that index non-destructively), be termed as \textbf{\textit{measurement basis}}. Hence, for all but one CNOT operations of $X_k$, one of its output qubit is marked and its quantum state is recorded in $C_l$. Let the description of the sequence of gates and the  measurement positions and measurement basis  in each of the \textbf{OWF unitary operations} $U_l$ be denoted by $F_l$ and we define $F_l\Ket{\psi} = C_l$.  
\end{enumerate}

\item The first two steps (1) and (2) are iterated $n$-times with different sets of sequential quantum gates, where the $l^{th}$ iteration corresponds to the \textbf{OWF unitary operation} $U_l$. Each of the $U_l$ operations are independent of one another. Let $C = (C_1\|C_2\ldots \|C_n).$
\item Then the one-way function ($\mathcal{F}$) can be described as $$\mathcal{F} = (F_1\|F_2|\cdots\|F_n)$$ and we define the image $\mathcal{F}\Ket{\phi}$ as the concatenation of the classical measured value of the combined quantum state in each of the $U_l$'s. Hence, in the above algorithm for the input state $\Ket{\psi}$, $$\mathcal{F}\Ket{\psi} = (C_1\|C_2\ldots \|C_n) = C .$$
The one-way function description $\mathcal{F}$ and the image $C$ constitute the public parameters of the OWF protocol for $\Ket{\psi}$.
\end{enumerate}
In the diagram below, we illustrate the action of the OWF evaluation procedure on a $4$-qubit $GCH$ state $(\Ket{00}\otimes\Ket{1}\otimes\Ket{+})$. The $4$ \textbf{OWF unitary operations} $\text{U}_1,\text{U}_2,\text{U}_3,\text{U}_4$ yield the $4$ chosen quantum states, the classical measured value of which $\text{C}_1,\text{C}_2,\text{C}_3,\text{C}_4$ together construct the image of the one-way function.
\begin{figure}[!ht]
\includegraphics[width=\textwidth]{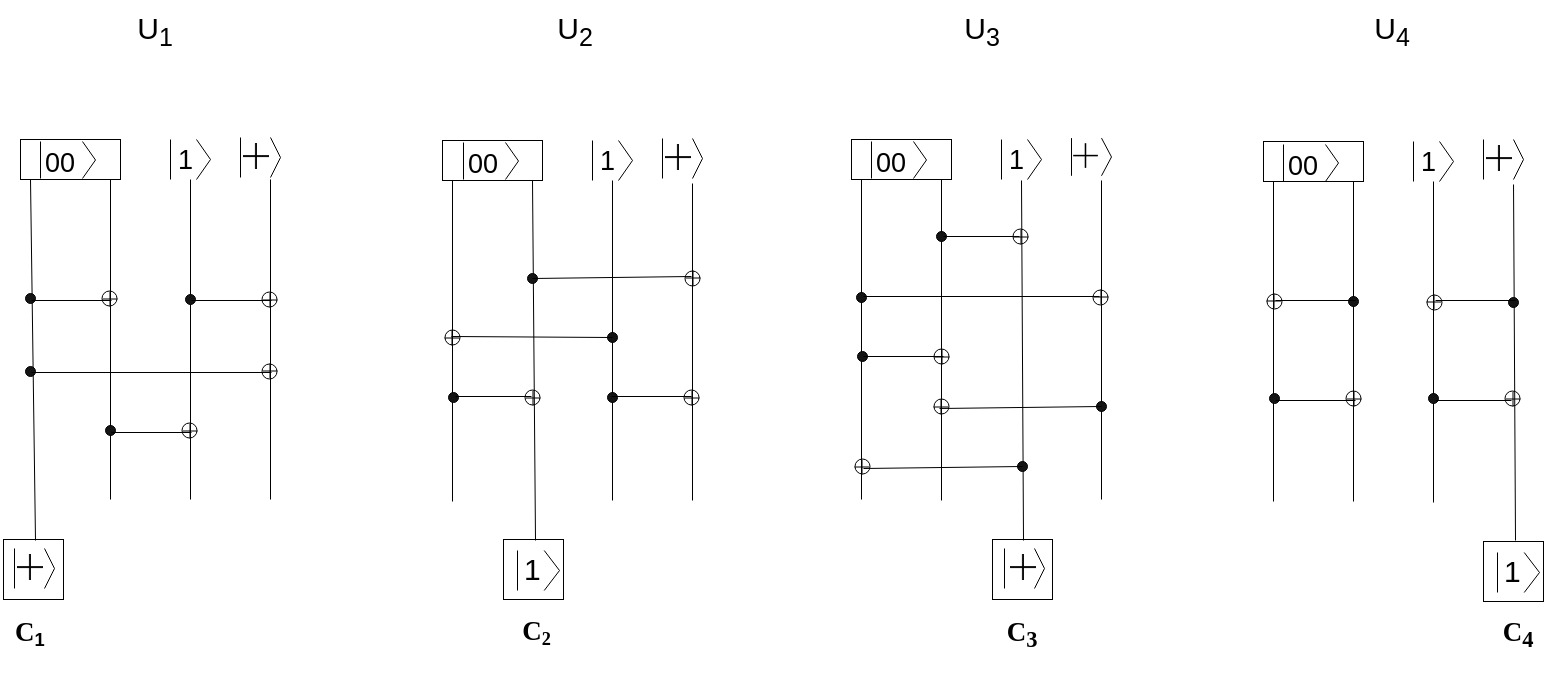}
\caption{Instance of OWF evaluation algorithm}
\label{fig1}
\end{figure}
In the OWF evaluation algorithm, for each \textbf{OWF unitary operation}, after the \textbf{OWF termination} involving exactly $\frac{n}{2}$ gate operations, depending upon the initial choice of the first  $(k-2)$ OWF gate operations, there can be two possible cases: either $\frac{n}{2}$ qubits are marked or $(\frac{n}{2} - 1)$ qubits are marked. Among the marked qubits, exactly $(\frac{n}{2} - 1)$ are randomly chosen such that their positions are declared to be the measurement positions. This is crucial in order to make sure that the adversary who does not know the input state cannot distinguish between the two cases. Hence, for each \textbf{OWF unitary operation}, there will be exactly $1$ out of $\frac{n}{2}$ CNOT operations in the \textbf{OWF termination}, none of whose output qubits will be a part of the qubits in the measurement positions. In Fig.~\ref{fig1}, for each $U_l$, there is exactly one such unmeasured CNOT gate. Although, the use of such unmeasured CNOT gates in OWF evaluation is vital for security aspects, it is also true that these gates are not required in OWF verification and hence their details can be removed from the one-way function description.

\subsection{\textbf{OWF Image Verification}}
After receiving a quantum state $\Ket{\phi}$ and the description of the one-way function $\mathcal{F}$ (i.e., the description of the $n$ quantum operations $U_1,U_2,\ldots,U_n,$ including their measurement positions and measurement basis) and the one-way function value $C$, an user runs the following verification algorithm.
 \begin{itemize}
 \item At first, for each of the \textbf{OWF unitary operations} $U_l$, the user applies sequentially the quantum gates on the quantum state and then measures the particular qubits in the measurement positions with the help of the measurement basis and records the measured value as $C'$.
 \item After that the input state is restored by inverting the gates in the reverse order.
 \item Then the user compares whether $$C = C'$$ and accepts if the equality holds, and rejects otherwise.
\end{itemize}
             
\section{Correctness and Complexity Analysis}
\label{Secprob}
In this section, first we are going to argue about the correctness of our algorithm. Then we will analyze the space and time complexities of our algorithm and we shall show that it is possible to implement our algorithm in reasonable amount of resources.

\subsection{Proof of correctness}
Firstly, note that the chosen one-way function for an input $GCH$ state is well defined. As the operations $U_l$'s are unitary, for a definite input $GCH$ state $\Ket{\psi}$ and the chosen one-way function $\mathcal{F}$, the quantum state $C_l$ in the measurement position is unique for every $l$ and hence the classical image $C$, under the chosen one-way function is unique. Thus the one-way function $\mathcal{F}$ is well defined. Also, by the same logic, it is clear that the original input state $\Ket{\psi}$ will always pass the OWF image verification algorithm.
\paragraph*{}In the OWF evaluation algorithm, at each step, there are choices that we have to make under some specified conditions. We will now show that we shall never run out of choices while executing the algorithm. Note that in each of the \textbf{OWF unitary operations} $U_l$'s, the ${(k-1)}^{th}$ and the $k^{th}$ gates are also parallel actions of CNOT operations on compatible pair of qubits.  Hence in order to show that the output state after each \textbf{OWF unitary operation} is a $GCH$ state, it suffices to show that the preconditions for compatible pairs of qubits ensure that after each such CNOT operation, the output state is a $GCH$ state.
\begin{lemma}\label{lemma_1}
If $\Ket{\phi}$ is an $n$-qubit $GCH$ state and CNOT gate is applied on a pair of compatible qubits (as defined in OWF evaluation), then the output quantum state is also an $n$-qubit $GCH$ state.
\end{lemma}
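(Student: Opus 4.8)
The plan is to observe that a \textbf{CNOT} on a compatible pair touches only the two qubits involved (and, through entanglement, the remaining qubits of whatever GHZ block contains one of them), while every other tensor factor of the $GCH$ state is left untouched. Hence it suffices to fix a single CNOT, restrict attention to the local block(s) carrying the control and target qubits, and verify in each compatibility case that the resulting local state is again a tensor product of elements of $\mathcal{C}$, $\mathcal{H}$ and GHZ states. The whole lemma then reduces to a finite case analysis dictated by the definition of compatibility, which I would organise along the two conditions of that definition.

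For the homogeneous cases of condition~1 I would argue as follows. If both qubits lie in $\mathcal{C}$, the computational basis is closed under CNOT and the output is again a product of $\mathcal{C}$ states. If both lie in $\mathcal{H}$, a direct two-qubit computation shows CNOT acts as a ``reversed'' controlled operation in the $\{\Ket{+},\Ket{-}\}$ basis (the roles of control and target effectively interchange), so both qubits remain in $\mathcal{H}$ up to a global phase. The interesting homogeneous case is when control and target are two qubits of the \emph{same} GHZ block $\frac{1}{\sqrt2}(\Ket{y_1\cdots y_m}+\Ket{\overline{y}_1\cdots\overline{y}_m})$ (necessarily the same block, since a CNOT across two distinct GHZ blocks would entangle them and destroy the product form): because the two branches are bitwise complementary, the new target value equals $y_k\oplus y_l$ in \emph{both} branches, so the target factors out as a definite $\mathcal{C}$ qubit and the GHZ block shrinks to size $m-1$ (to a single $\Ket{+}\in\mathcal{H}$ when $m=2$). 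Thus the $GCH$ form is preserved, with one GHZ block replaced by a smaller one together with a fresh $\mathcal{C}$ (or $\mathcal{H}$) qubit.

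For the mixed cases of condition~2 I would check each allowed ordered type $(l,k)$ in turn. The case $(\mathcal{C},\mathcal{H})$ is immediate, since a Hadamard qubit is an $X$-eigenstate, so a computational control produces at most a global phase and both bases are preserved. For $(\mathcal{C},\mathcal{G})$ the computational control either acts trivially or relabels the GHZ block (flipping the target index in both branches), keeping it a valid GHZ of the same size. For $(\mathcal{G},\mathcal{C})$ the GHZ control correlates the computational target into the superposition, so the target is absorbed and the block grows to size $m+1$. Finally $(\mathcal{H},\mathcal{C})$ turns a Hadamard control and a computational target into a two-qubit GHZ (Bell) state. In each case the output is again a product of $\mathcal{C}$, $\mathcal{H}$ and GHZ pieces.

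The main obstacle, and the point I would treat most carefully, is the phase bookkeeping whenever a $\Ket{-}$ is involved, together with seeing why $(\mathcal{H},\mathcal{G})$ is (correctly) \emph{excluded} from compatibility. In the allowed cases $(\mathcal{H},\mathcal{C})$ with a $\Ket{-}$ control and $(\mathcal{G},\mathcal{H})$ with a $\Ket{-}$ target, the calculation produces a \emph{relative} sign between the two GHZ branches, i.e. a state of the form $\frac{1}{\sqrt2}(\Ket{y_1\cdots y_m}-\Ket{\overline{y}_1\cdots\overline{y}_m})$, which is not literally one of the plus-sign elements of $\mathcal{B}_m$; the proof must therefore read $\mathcal{G}$ as closed under these $X$-induced relative phases (the qubits remain entangled in a GHZ-type block of the same size, which is all the $GCH$ property requires). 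By contrast, for the forbidden type $(\mathcal{H},\mathcal{G})$ the same calculation shows the control qubit becoming entangled with \emph{two distinct} GHZ states, yielding $\frac{1}{\sqrt2}(\Ket{0}\Ket{\mathrm{GHZ}}-\Ket{1}\Ket{\mathrm{GHZ}'})$, which is genuinely outside the $GCH$ family. This is precisely the configuration the compatibility definition rules out, and drawing this contrast is what makes the case analysis a genuine proof rather than a mere enumeration.
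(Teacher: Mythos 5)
Your proof is correct and follows essentially the same route as the paper's own: a case-by-case analysis, organised by compatibility type, of what a single CNOT does to the local block(s) carrying the control and target while every other tensor factor is untouched. In fact your treatment is more careful than the paper's, which merely lists the six cases without calculation and silently passes over both points you flag --- the relative-sign issue (the paper's $\mathcal{B}_m$ as literally defined contains only plus-sign states, so closure under a $\Ket{-}$ control or target requires reading $\mathcal{G}$ as including minus-sign GHZ-type states, exactly as you do) and the instructive contrast with the excluded $(\mathcal{H},\mathcal{G})$ configuration, which genuinely leaves the $GCH$ family.
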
 
\begin{proof}
Let the $k^{th}$ and the $l^{th}$ qubits be any arbitrary pair of compatible qubits of $\Ket{\phi}$ and say we apply the CNOT gates on them with the $l^{th}$ qubit as the control qubit. Let this be represented as a unitary transformation $H$ on $\Ket {\phi}$. We need to show that $$H\Ket{\phi} \in \left\{\Ket {\psi^{(n)}}_{GCH}\right\}.$$
Now since the qubits are a pair of compatible qubits, one can verify that exactly one of the following cases occur.
\begin{enumerate}
\item If both the $l^{th}$ and the $k^{th}$ qubits are in $\mathcal{C}$ or $\mathcal{H}$, then the output quantum states of both the $l^{th}$ and the $k^{th}$ qubits will be in $\mathcal{C}$ or $\mathcal{H}$ respectively.
\item If both the $l^{th}$ and the $k^{th}$ qubits are entangled into the same GHZ state which itself is in $\mathcal{B}_m$ $(1<m<n)$, then in the output state, the $l^{th}$ qubit is in the same GHZ state (for $m = 2$, the $l^{th}$ qubit gets transformed into an element of $\mathcal{H}$), but the size of the GHZ state is decreased by one as the $k^{th}$ qubit is disentangled into an element in $\mathcal{C}$.
\item If the $l^{th}$ qubit is in $\mathcal{H}$ and the $k^{th}$ qubit is in $\mathcal{C}$, then the combined output state $\Ket{\phi}_{l,k} \in \mathcal{B}_2$.
\item If the $l^{th}$ qubit is in $\mathcal{C}$ and the $k^{th}$ qubit is in $\mathcal{H}$ or entangled in a GHZ state ($\mathcal{G}$), then both the qubits will remain in the same basis in the output state, as they were before.
\item If the $l^{th}$ qubit is entangled in a GHZ state and the $k^{th}$ qubit is in $\mathcal{C}$, then the $k^{th}$ qubit will get entangled into the same GHZ state in which the $l^{th}$ qubit is entangled, increasing the size of the GHZ state by 1.
\item If the $l^{th}$ qubit is entangled in a GHZ state (of $\mathcal{G}$) and the $k^{th}$ qubit is in $\mathcal{H}$, then both the qubits will remain in the same basis in the output state, as they were before.
\end{enumerate}

Hence we notice that the CNOT operations that we allow on the compatible pair of qubits of a $GCH$ state, keep the total quantum state in $GCH$ state only. Since we started with an $n$-qubit $GCH$ state $\Ket{\psi}$, each of the output states $\Ket{{\phi}_i}$ of the gates $X_i$ $(i=1,2\ldots k)$ is a $GCH$ state. In fact, after the $k^{th}$ gate in each operation $U_{l}$, the qubits in the measurement positions are either elements in $\mathcal{C}$ or $\mathcal{H}$, since the marked qubits are such that they correspond to the Hadamard and computational basis elements used as control and target qubits respectively in the (1) and (2) case as given above, and the combined quantum state of the remaining qubits form a $\frac{n}{2} + 1$ qubit $GCH$ state.

Also, by the previous chart, we conclude that after a CNOT operation on compatible pairs of qubits, the output pair of qubits is also compatible for CNOT operation. Therefore, inverting the CNOT operation on the output state ($U\Ket{\phi}$) of a \textbf{OWF unitary operation} ($U$), in reverse directions, is also a \textbf{OWF unitary operation}. Hence the inverted \textbf{OWF unitary operation} ($U^\dagger$) on the output state ($U\Ket{\phi}$) is also a \textbf{OWF unitary operation}.
\end{proof}

\paragraph*{}The obvious next question is whether or not for each \textbf{OWF unitary operation} $U_l$, we can chose the first $(k-2)$ \textbf{OWF gate operations} obeying the different conditions as specified in the OWF evaluation algorithm. For that, we will show that unless all qubits have been saturated, we can choose compatible pairs of qubits for CNOT operations, obeying all essential properties.

\begin{lemma}\label{lemma_2}
In the OWF evaluation algorithm, while choosing the \textbf{OWF gate operations} for any \textbf{OWF unitary operation} $U_l$, at any point of time if a qubit is unsaturated, then it can be saturated by selecting a suitable \textbf{OWF gate operation} in the next step such that all possible pairs of positions remain \textbf{CNOT-once} after that step.
\end{lemma}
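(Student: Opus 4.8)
The plan is to reduce the entire statement to a single existence claim, namely that the unsaturated qubit admits at least one \emph{compatible} partner for a CNOT gate; once such a partner is produced, everything else is immediate. Suppose the qubit at position $k$ is unsaturated. By the definition of saturation, this qubit has not taken part in any CNOT so far, so for \emph{every} other position $j$ the pair $(k,j)$ has had exactly zero CNOT operations applied to it. Hence, if I take the next \textbf{OWF gate operation} to be a single CNOT acting on some compatible pair $(k,l)$, this raises the CNOT count of the pair $(k,l)$ from $0$ to $1$ while leaving the count of every other pair unchanged; since the global invariant is that all pairs are \textbf{CNOT-once} before this step, all pairs remain \textbf{CNOT-once} afterwards. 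The operation uses the qubit at $k$, so $k$ becomes \textbf{saturated}, and a single CNOT on a compatible pair is a legitimate \textbf{OWF gate operation} (a parallel combination of one gate) which by Lemma~\ref{lemma_1} keeps the state a $GCH$ state. Thus the whole lemma hinges on finding a compatible $l$, which I would establish by case analysis on the basis of the qubit at position $k$, using that $n>1$ guarantees at least one other qubit.

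If the $k$-th qubit lies in $\mathcal{C}$, then against a partner in $\mathcal{C}$, $\mathcal{H}$ or $\mathcal{G}$ the pair is compatible in at least one orientation, because the compatibility list contains $(\mathcal{C},\mathcal{C})$, both $(\mathcal{C},\mathcal{H})$ and $(\mathcal{H},\mathcal{C})$, and both $(\mathcal{C},\mathcal{G})$ and $(\mathcal{G},\mathcal{C})$; so any other qubit serves. If the $k$-th qubit lies in $\mathcal{H}$, the only orientation excluded by the definition is an $\mathcal{H}$-control with a $\mathcal{G}$-target, but whenever the partner is in $\mathcal{G}$ I can instead designate that $\mathcal{G}$-qubit as the control and the $\mathcal{H}$-qubit as the target, which is the listed compatible case $(\mathcal{G},\mathcal{H})$; partners in $\mathcal{C}$ or $\mathcal{H}$ pose no problem, so again some other qubit works.

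The delicate case, which I expect to be the main obstacle, is when the $k$-th qubit is entangled in a GHZ state, since two qubits lying in \emph{different} GHZ states are not compatible (indeed a CNOT across two distinct GHZ states would not preserve the $GCH$ form, which is why Lemma~\ref{lemma_1} has no case for it). The observation that resolves this is that every GHZ state in $\mathcal{G}$ has size at least $2$: the bases $\mathcal{B}_m$ are defined only for $m>1$, and any CNOT that would shrink a GHZ state below size $2$ instead disentangles it into elements of $\mathcal{C}$ and $\mathcal{H}$ rather than leaving a size-$1$ GHZ state, exactly as recorded in Lemma~\ref{lemma_1}. Consequently the qubit at $k$ shares its GHZ state with at least one further qubit, and a CNOT on that same-GHZ pair is compatible by the first case of the compatibility definition (alternatively any $\mathcal{C}$-qubit, or any $\mathcal{H}$-qubit taken as target with the GHZ qubit as control, would also serve). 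This secures a compatible partner in all three cases and completes the argument; I would note that the lemma concerns only the single next step, so the one-CNOT construction above is all that is required, with the choice of orientation above being the one that avoids forcing an immediate incompatibility.
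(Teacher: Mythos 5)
Your proof is correct and follows essentially the same route as the paper's: a case analysis on the basis ($\mathcal{C}$, $\mathcal{H}$, or $\mathcal{G}$) of the unsaturated qubit to exhibit a compatible partner (a same-GHZ partner in the $\mathcal{G}$ case, any other qubit otherwise, taking the partner as control when the unsaturated qubit is in $\mathcal{H}$), combined with the observation that every pair involving the unsaturated qubit has seen zero CNOTs so far, so the \textbf{CNOT-once} invariant is preserved. Your write-up is, if anything, more explicit than the paper's about the excluded $(\mathcal{H}\text{-control},\,\mathcal{G}\text{-target})$ orientation and about why a GHZ partner must lie in the \emph{same} GHZ state, but these are refinements of the same argument, not a different approach.
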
 
\begin{proof} By the previous lemma, the output state, after any of the gates $X_i$ $(i=1,2,\ldots,k)$, is a $GCH$ state. Hence, the unsaturated qubit say the $j^{th}$ qubit, can be either an element in $\mathcal{C}$,$\mathcal{H}$ or is entangled in a GHZ state at that time.
\begin{enumerate}
\item If the $j^{th}$ qubit is entangled in a GHZ state, then there shall be at least one more qubit say, the ${l}^{th}$ qubit, which is entangled in the same GHZ state at that time. Then we can apply a CNOT gate operation on the $l^{th}$ and the $j^{th}$ qubit with anyone of them used as the control qubit.

 \item If the $j^{th}$ qubit is an element in $\mathcal{C}$, then we can choose any other qubit, say the $l^{th}$ qubit and apply a CNOT operation on the $l^{th}$ and the $j^{th}$ qubit with any one of them being used as the control qubit.

 \item If the $j^{th}$ qubit is an element in $\mathcal{H}$, then we can choose any other qubit, say the $l^{th}$ qubit and apply a CNOT operation with the $l^{th}$ qubit as the control qubit and the $j^{th}$ qubit as the target qubit.
\end{enumerate}
 Note that in each of the cases, the $l^{th}$ and the $j^{th}$ qubit together form a compatible pair, and after this \textbf{OWF gate operation}, the pair remains \textbf{CNOT-once} (this pair could not have been used previously, since the $j^{th}$ qubit was unsaturated). Hence, all possible pairs of qubits remain \textbf{CNOT-once} after this step. 
\end{proof}

\begin{theorem}\label{thm_1} Unless, all of the qubits have been saturated, we can always find at least one compatible pair of qubits for CNOT operations in the next \textbf{OWF gate operation} such that the pair after this operation remains \textbf{CNOT-once}.
\end{theorem}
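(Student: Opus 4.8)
The plan is to derive this statement as an immediate consequence of Lemma \ref{lemma_2}, which already does essentially all of the work. First I would observe that the hypothesis ``not all of the qubits have been saturated'' is exactly the premise guaranteeing the existence of at least one unsaturated position; call it the $j^{th}$ qubit. By Lemma \ref{lemma_1}, the state obtained after any of the gates $X_i$ is still a $GCH$ state, so the $j^{th}$ qubit is necessarily in $\mathcal{C}$, in $\mathcal{H}$, or entangled in some GHZ state of $\mathcal{G}$, which is precisely the trichotomy on which the next step rests.

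Next, I would invoke Lemma \ref{lemma_2} directly: it asserts that whenever a qubit is unsaturated, one can pick a suitable \textbf{OWF gate operation} in the next step that saturates it while keeping every pair of positions \textbf{CNOT-once}. Tracing the three cases in the proof of that lemma, the gate operation is built around a single CNOT on the compatible pair $(j,l)$ for an appropriately chosen partner qubit, the $l^{th}$ one (a co-entangled qubit of the same GHZ state when the $j^{th}$ qubit is in $\mathcal{G}$, an arbitrary qubit when it is in $\mathcal{C}$, and an arbitrary qubit used as control when it is in $\mathcal{H}$). This pair is compatible by construction, and so it furnishes exactly the compatible pair demanded by the theorem.

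Finally, I would verify the \textbf{CNOT-once} property for this specific pair. Since the $j^{th}$ qubit was unsaturated before the operation, it had never appeared in any earlier CNOT; in particular the pair $(j,l)$ had never been operated on. Hence after applying exactly one CNOT to it, the pair $(j,l)$ carries exactly one CNOT and is therefore \textbf{CNOT-once}. I do not anticipate a genuine obstacle here, precisely because the theorem is logically weaker than Lemma \ref{lemma_2}: the theorem asks only for the existence of one good compatible pair, whereas the lemma already produces such a pair and additionally certifies that \emph{all} pairs stay \textbf{CNOT-once}. The only point worth stating with care is the trivial-sounding but essential remark that the non-saturation hypothesis is exactly what supplies the unsaturated qubit to which the lemma is applied.
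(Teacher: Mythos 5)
Your proposal is correct and follows essentially the same route as the paper: the non-saturation hypothesis supplies an unsaturated qubit, and Lemma \ref{lemma_2} then furnishes the compatible, \textbf{CNOT-once} pair. Your additional tracing of the lemma's three cases and the explicit verification of the \textbf{CNOT-once} property are elaborations of detail, not a different argument.
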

\begin{proof}
If all of the qubits have not been saturated, then $$ \exists k \in [1,n] $$ such that $k^{th}$ qubit is not saturated.

Hence by the previous lemma, we can find another qubit say the $l^{th}$ qubit, such that they form a compatible pair for CNOT gate operations in the next \textbf{OWF gate operation}.
\end{proof}

\subsection{Complexity of the algorithm}
In the OWF evaluation algorithm, for each \textbf{OWF unitary operation}, we selected $k$ many \textbf{OWF gate operations}, each consisting of parallel actions of CNOT gates
and we had to keep track of the pairs of qubits that have been used in previous CNOT operations and which qubits have not been used at all so far. The latter can be done by keeping a list of used pairs of qubits and an one dimensional array of length $n$ to keep track whether the qubit has been used before or not and this is iterated $n$ times for each of the \textbf{OWF unitary operations}. The space required for maintaining the array and list is still polynomial.  The number of times we have to update or view the array and the list is exactly equal to the total number of CNOT operations used in the one-way function. Therefore, we can express the time complexity of the OWF evaluation algorithm in terms of the number of CNOT gates used.
\begin{theorem}
In the OWF evaluation algorithm, the number of CNOT gates used is $\mathcal{O}(n^3)$.
\end{theorem}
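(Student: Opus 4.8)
The plan is to bound the number of CNOT gates contributed by a single \textbf{OWF unitary operation} $U_l$, and then multiply by $n$, since the evaluation iterates the construction $n$ times to produce $U_1,\ldots,U_n$, each independent of the others.

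First I would count the CNOTs inside one $U_l$, splitting them according to the three phases of its construction. The crucial observation is that the first $(k-2)$ \textbf{OWF gate operations} are constrained by the \textbf{CNOT-once} property: after these operations, every unordered pair of positions has been used in at most one CNOT. Since there are exactly $\binom{n}{2}$ unordered pairs of positions, the total number of CNOT gates across all of the first $(k-2)$ gate operations is at most $\binom{n}{2} = \mathcal{O}(n^2)$. I want to emphasise that this bound is independent of $k$: although $k$ is chosen ``sufficiently large'' to guarantee saturation of every position (Lemma~\ref{lemma_2}), the parallel structure of the gate operations does not inflate the count, because each additional CNOT must consume a fresh pair of positions in order to preserve the \textbf{CNOT-once} invariant.

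Next I would handle the two terminating gates. The $(k-1)^{th}$ gate applies one \textbf{CF} operation to each odd-sized GHZ state; since every GHZ state has size at least $2$, there are at most $\lfloor n/2 \rfloor$ of them, hence $\mathcal{O}(n)$ CNOTs. The $k^{th}$ gate is the \textbf{OWF termination}, which by definition uses exactly $\frac{n}{2}$ CNOT operations, again $\mathcal{O}(n)$. Summing the three phases yields $\mathcal{O}(n^2) + \mathcal{O}(n) + \mathcal{O}(n) = \mathcal{O}(n^2)$ CNOT gates per \textbf{OWF unitary operation} $U_l$.

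Finally, since the algorithm performs $n$ such unitary operations, the total number of CNOT gates is $n \cdot \mathcal{O}(n^2) = \mathcal{O}(n^3)$, as claimed. The only genuinely delicate step is the first one: the point to recognise is that it is the \textbf{CNOT-once} invariant, rather than the a priori unbounded parameter $k$, that controls the gate count, reducing an apparently unbounded quantity to the combinatorial cap $\binom{n}{2}$. The two terminating gates are routine linear-in-$n$ contributions that do not affect the leading order.
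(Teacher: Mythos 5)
Your proposal is correct and follows essentially the same route as the paper: bound the first $(k-2)$ gate operations of each $U_l$ by the \textbf{CNOT-once} invariant (at most $\binom{n}{2}$ CNOTs), bound the last two gates by $\mathcal{O}(n)$ each, and multiply by the $n$ iterations to obtain $\mathcal{O}(n^3)$. The only cosmetic difference is that for the $(k-1)^{th}$ gate the paper invokes the generic cap of $\frac{n}{2}$ parallel CNOTs on $n$ qubits, whereas you count odd-sized GHZ states directly; both give the same linear bound.
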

\begin{proof}
Let $Y$ be the total number of CNOT gates used in the OWF evaluation algorithm. We want to show that $$  Y  \in \mathcal{O}(n^3). $$

Fix $m \in [1,2\dots ,n]$.

For the \textbf{OWF unitary operation} $U_m$,
let $j$ be the number of the CNOT operations used in the first $(k-2)$ gates of the OWF evaluation algorithm and let $l$ be the total number of CNOT operations in the $k$ gates. In the last $(k-1)^{th}$ gate, there can be a maximum of $\frac{n}{2}$ parallel CNOT gates as there are only $n$-qubits, and we saw that there are $\frac{n}{2}$ CNOT operations in the $k^{th}$ gate. Hence,
\begin{equation}\label{eqn_1}
 l \leq j + \frac{n}{2} + \frac{n}{2} = j + n. 
\end{equation}

Each of the CNOT operations in the first $(k-2)$ gates, corresponds to a pair of compatible qubits and by the condition 1.(a) in the OWF evaluation algorithm (that pairs of qubits are used for at most one CNOT operation), we conclude that 
\begin{eqnarray*}
j &\leq& \text{ total number of pairs possible }\\
&=& \frac{n(n-1)}{2}.
\end{eqnarray*}
By \eqref{eqn_1}, we get,
\begin{equation*}  l  \leq  \frac{n(n-1)}{2} + n.  \end{equation*}

This is true for each of the $n$ \textbf{OWF unitary operation} $U_m$.
Therefore, 
$$Y \leq n\left(\frac{n(n-1)}{2} + n\right) = \frac{n^3 + n^2}{2}  \in \mathcal{O}(n^3). $$                       
\end{proof}

Assuming that a CNOT gate takes constant time, we conclude that our OWF evaluation algorithm works in polynomial time.

The OWF image verification algorithm involves use of those CNOT gates, that have been chosen during the OWF evaluation algorithm followed by measurement and comparisons which take only linear time. Hence we can also implement the OWF image verification algorithm in polynomial time.

\paragraph*{} Now, we will analyse the space complexity of the two algorithms. The space required for applying the CNOT gates is proportional to the number of CNOT gates used, hence by previous theorem, only polynomial space is required for executing the gate operations. Next, we will see the amount of space required for storing the description of the one-way function and its image is also polynomial.

\begin{theorem}
The space required for storing the description of the one-way function is $\mathcal{O}(n^3) $ whereas to store the image is $\mathcal{O}(n^2)$.
\end{theorem}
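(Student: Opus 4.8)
The plan is to split the stored data into its two independent parts — the function description $\mathcal{F} = (F_1 \| F_2 \| \cdots \| F_n)$ and the image $C = (C_1 \| C_2 \| \cdots \| C_n)$ — and to bound the storage of a single $F_l$ and a single $C_l$, then multiply by the $n$ iterations. For the description I would further decompose each $F_l$ into three pieces: the list of gates constituting $U_l$, the $\frac{n}{2}-1$ measurement positions, and the measurement basis attached to each such position. Each piece is then bounded separately and the three contributions are summed.

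For the gate list I would invoke the preceding complexity theorem directly: the total number of CNOT gates across all $n$ \textbf{OWF unitary operations} is $\mathcal{O}(n^3)$, and the proof of that theorem in fact bounds the count inside a single $U_l$ by $\frac{n(n-1)}{2}+n = \mathcal{O}(n^2)$. Recording one CNOT gate amounts to recording the (ordered) pair of positions it acts on, a constant number of indices in $[1,n]$, so the gate list of one $U_l$ costs $\mathcal{O}(n^2)$ storage units and $\mathcal{O}(n^3)$ over all $l$. The measurement positions contribute only $\frac{n}{2}-1 = \mathcal{O}(n)$ indices per $U_l$, hence $\mathcal{O}(n^2)$ overall; and by the construction the measurement basis is always a single-qubit projective measurement in $\mathcal{C}$ or $\mathcal{H}$, so each carries $\mathcal{O}(1)$ information and the whole collection adds at most $\mathcal{O}(n^2)$. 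Summing, the $\mathcal{O}(n^3)$ gate list dominates, which gives the claimed bound for storing $\mathcal{F}$.

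For the image I would observe that each $C_l$ is the classical outcome of measuring the qubits at the $\frac{n}{2}-1$ measurement positions, and that by Lemma~\ref{lemma_1} these qubits are elements of $\mathcal{C}$ or $\mathcal{H}$ at the moment of measurement, so each yields exactly one classical bit. Hence $|C_l| = \frac{n}{2}-1 = \mathcal{O}(n)$ and $|C| = \sum_l |C_l| = \mathcal{O}(n^2)$, as claimed.

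The one point needing care — and the main obstacle to a perfectly clean statement — is the accounting model. The $\mathcal{O}(n^2)$ image bound is honest in the bit model, since each measured qubit contributes exactly one bit. The $\mathcal{O}(n^3)$ description bound, however, counts each stored position index (equivalently, each CNOT record) as one unit; in a strict bit model each index costs $\mathcal{O}(\log n)$ bits, inflating the bound to $\mathcal{O}(n^3 \log n)$. I would therefore state explicitly that space for the gate descriptions is measured in machine words of size $\mathcal{O}(\log n)$, matching the word-level convention already used implicitly when the preceding theorem equated ``space'' with the number of CNOT gates and array/list entries. With that convention fixed, both bounds follow immediately from the gate-count theorem and the single-bit-per-measurement observation.
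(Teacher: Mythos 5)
Your proof is correct and follows essentially the same route as the paper's: bound each $F_l$ by the $\mathcal{O}(n^2)$ per-iteration CNOT count from the preceding complexity theorem plus $\mathcal{O}(n)$ for the measurement positions and bases, bound each $C_l$ by $\mathcal{O}(n)$ since only $\frac{n}{2}-1$ single-qubit states (each one of $\Ket{0},\Ket{1},\Ket{+},\Ket{-}$) must be recorded, and multiply both bounds by the $n$ iterations. Your explicit remark that the $\mathcal{O}(n^3)$ description bound counts machine words of size $\mathcal{O}(\log n)$ rather than bits is a fair clarification of a convention the paper uses implicitly, but it does not change the substance of the argument.
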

\begin{proof}
 Fix $l \in [1,2\dots ,n]$, arbitrary,
 For the \textbf{OWF unitary operations} $U_l$.
 By the previous theorem,
 $$ \text{the number of CNOT gates used is in } \mathcal{O}(n^2).$$

 As far as the measurement positions and measurement basis are concerned, note that,
  we have to store the POVM and positions for $\frac{n}{2} - 1$ qubits.
Hence,$$ \text{space required for } F_l \in \mathcal{O}(n^2) \text{ , }\forall l = 1,2, \ldots ,n.$$
For $C_l$, we have to store the quantum state of $\frac{n}{2} - 1$ qubits (which can be $\Ket{0},\Ket{1},\Ket{+}$ or $\Ket{-}$).

Hence,$$ \text{space required for } C_l \in \mathcal{O}(n) \text{ ,  }\forall l = 1,2, \ldots ,n.$$
Therefore, $$\text{total  space required for }H \in \mathcal{O}(n^3).$$
and $$\text{total  space required for }C \in \mathcal{O}(n^2).$$  
\end{proof}

 Since both the algorithms demand only the storage of the one-way function description and its image, we conclude that both the algorithms run in $\mathcal{O}(n^3)$ space.

\paragraph*{}Due to low space and time complexity, our proposed one-way function can be feasibly used in various other protocols of quantum crypto currencies for authenticating quantum states.

\section{Security Analysis} Now we shall discuss about the security aspects of our protocol, where the main aim is to ensure the one-wayness of the the proposed one-way function and prove its security against random guesses.

Firstly, notice that, in the OWF Image verification algorithm, if a different quantum state, say $\Ket{\phi} \neq \Ket{\psi}$ (original input $GCH$ state), is used as input and for any \textbf{OWF unitary operation} $U_l$, if the qubits in the measurement positions of $U_l\Ket{\phi}$ is not in the same basis ($\mathcal{B},\mathcal{C}$, or $\mathcal{H}$) as that of the qubits in the measurement of $U_l\Ket{\psi}$, then measuring through the POVM as recorded in the measurement basis would result in the destruction of the state $U_l\Ket{\phi}$. Hence, the primary goal for an adversary is to produce a state such that the qubits in the measurement position of this state are in the same basis as that of the qubits in the measurement positions of $U_l\Ket{\psi}$.

Also, note that in our OWF evaluation algorithm, for any pair of positions, say $(k,l)$, we have used at most two CNOT operations on the qubits at these positions. Moreover, if there are two CNOT operations on the $k^{th}$ and the $l^{th}$ qubits, the second CNOT operation is due to a \textbf{CF} operation on that pair of positions and hence the two CNOT operations do not nullify each other. This is because CNOT gates are self invertible (i.e., $U^2 = UU^{\dagger} = I$) and non-symmetric about the qubits.

\subsection{Non-invertibility of the one-way function}
Now, we will show that the way we defined the operations in the OWF evaluation algorithm, it is not possible to invert the gates and get back the input state.
Finding the input state $\Ket {\psi}$ by inverting the gates basically means constructing the output state $U_l\Ket{\phi}$ completely or partially and then applying the inverted gates on it to get the input state completely or partially, i.e., finding the quantum state of qubits at some positions. Since half the output state is revealed, we need to ensure that by the knowledge of half the qubits and applying the inverted gates, one should not be able to get back the input state, not even partially. Although both the OWF evaluation and verification algorithm involves measurement, it is not straightforward why inversion of gates is not possible. This is because the output state $U_l\Ket{\phi}$ is preserved after the measurement and hence measurement does not really play a role in terms of non-invertibility of gates.
\begin{theorem}
For any \textbf{OWF unitary operation} $U_l$ ($\forall 1\leq l\leq n$) as defined in the OWF evaluation algorithm, given the value of $F_l$ and $C_l$ it is not possible for an adversary to find the quantum state of the qubits of the input $GCH$ state $\Ket {\psi}$ in polynomial time.
\end{theorem}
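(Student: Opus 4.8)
\subsection*{Proof proposal}

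The plan is to recast inversion as an information-deficit problem. Since each $U_l$ is unitary and the adversary is handed the full description $F_l$, the adversary can construct $U_l^\dagger$, which by Lemma~\ref{lemma_1} is itself a \textbf{OWF unitary operation}. Any attempt to recover $\Ket{\psi}$ therefore amounts to applying $U_l^\dagger$ to the complete output state $U_l\Ket{\psi}$. The only data the adversary actually holds about that output is $C_l$, which records the definite $\mathcal{C}/\mathcal{H}$ states of the $\tfrac{n}{2}-1$ measured qubits. So the whole argument reduces to showing that the quantum information on the $\tfrac{n}{2}+1$ \emph{unmeasured} qubits is both indispensable for inversion and irrecoverable from $(F_l,C_l)$.

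First I would invoke the factorization established inside the proof of Lemma~\ref{lemma_1}: after the $k^{th}$ gate the output splits as a tensor product of the measured qubits (each an element of $\mathcal{C}$ or $\mathcal{H}$) with a $\left(\tfrac{n}{2}+1\right)$-qubit $GCH$ state on the unmeasured positions. The adversary knows the first factor (it is exactly $C_l$) but has no access to the second. I would then build an explicit family of indistinguishable preimages: for every $\left(\tfrac{n}{2}+1\right)$-qubit $GCH$ state $\Ket{g}$ on the unmeasured positions, set
$$\Ket{\psi_g} \;=\; U_l^\dagger\!\left(\Ket{C_l}_{M}\otimes\Ket{g}_{\overline{M}}\right).$$
Because $U_l^\dagger$ preserves the $GCH$ class and $\Ket{C_l}_{M}\otimes\Ket{g}_{\overline{M}}$ is already a $GCH$ state, each $\Ket{\psi_g}$ is a legitimate input; and running $F_l$ on $\Ket{\psi_g}$ returns $U_l\Ket{\psi_g}=\Ket{C_l}_{M}\otimes\Ket{g}_{\overline{M}}$, whose measurement at the recorded positions in the recorded basis yields $C_l$ deterministically. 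Thus every $\Ket{\psi_g}$ is perfectly consistent with the public data $(F_l,C_l)$, and since there are exponentially many distinct $GCH$ states on $\tfrac{n}{2}+1$ qubits, the adversary faces an exponentially large set of equally plausible inputs. No algorithm can single out $\Ket{\psi}$ from this set, a fortiori not one running in polynomial time.

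To upgrade this to ``not even partially,'' I would argue that the family $\{\Ket{\psi_g}\}$ disagrees at \emph{every} input position, i.e.\ for each position $p$ there exist $\Ket{g}\neq\Ket{g'}$ whose preimages differ on qubit $p$. The seed is the \textbf{OWF termination} layer $X_k$: each measured qubit is paired, by the marking rules, with an unmeasured partner, so already the first gate of $U_l^\dagger$ (namely $X_k$) cannot be undone on those pairs without the unknown partner. I would then propagate this dependence backwards through $X_{k-1},\ldots,X_1$, using the saturation guarantee (Theorem~\ref{thm_1}, Lemma~\ref{lemma_2}) and the \textbf{CNOT-once} structure to show that the unknown unmeasured information reaches every position, so no single input qubit is pinned down.

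The main obstacle is precisely this last propagation step. Steps one and two are clean: the tensor factorization is quoted from Lemma~\ref{lemma_1}, and the counting is immediate. The delicate part is tracking, across all $k$ layers of the (generally entangling) CNOT network, exactly which unmeasured output qubits influence each input position; rigorously this means analysing the $\mathbb{F}_2$-linear action of the CNOT circuit together with the GHZ-entanglement bookkeeping, and verifying that saturation forces a nonempty influence from the unmeasured region onto every position. I expect the information-theoretic indistinguishability (hence unconditional, not merely computational, non-invertibility) to fall out easily, while the ``every qubit is hidden'' refinement will require the careful structural accounting just described.
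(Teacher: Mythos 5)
Your reduction to an information deficit on the unmeasured register is sound in outline, and the consistency claim is fine: since $U_l\Ket{\psi_g}=\Ket{C_l}_{M}\otimes\Ket{g}_{\overline{M}}$, measuring the recorded positions in the recorded bases returns $C_l$ deterministically. The genuine gap is the sentence ``Because $U_l^\dagger$ preserves the $GCH$ class \ldots\ each $\Ket{\psi_g}$ is a legitimate input.'' Lemma~\ref{lemma_1} does not say that $U_l^\dagger$ preserves the $GCH$ class as an operator; it says a CNOT preserves the class when it acts on a \emph{compatible} pair, and compatibility is a property of the state being acted on, not of the gate. The inverse circuit applied to the true output $U_l\Ket{\psi}$ meets compatible pairs at every layer by construction, but applied to $\Ket{C_l}_{M}\otimes\Ket{g}_{\overline{M}}$ for an \emph{arbitrary} $GCH$ state $\Ket{g}$ it generally does not: if some gate of $U_l^\dagger$ has its control on a qubit that $\Ket{g}$ places in $\mathcal{H}$ and its target on a qubit that $\Ket{g}$ entangles into a GHZ pair (an excluded configuration in the compatibility definition), the output contains the four-term state $\frac{1}{2}\left(\Ket{000}+\Ket{011}+\Ket{110}+\Ket{101}\right)$, which is not of $GCH$ form. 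Consequently many of your $\Ket{\psi_g}$ lie outside the domain $\left\{\Ket{\psi^{(n)}}_{GCH}\right\}$; an adversary who knows the domain can discard them, so the exponential count of ``equally plausible inputs'' is not established as written.

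The repair is to vary $\Ket{g}$ only over $GCH$ states with the \emph{same} basis and entanglement pattern as the true unmeasured factor $\Ket{\beta}$ of $U_l\Ket{\psi}$ (same positions in $\mathcal{C}$, same positions in $\mathcal{H}$, same GHZ groupings). For those, the observation that a CNOT's effect on bases depends only on the input bases guarantees the inverse circuit traverses exactly the same compatibility pattern as it does for $U_l\Ket{\psi}$, so all these preimages are $GCH$; and they are still exponentially many (two choices per free $\mathcal{C}$ or $\mathcal{H}$ qubit and $2^{m-1}$ per size-$m$ GHZ block). This repaired argument is precisely the mirror image of the paper's Lemma~\ref{lemma_4}, which varies the \emph{measured} factor $\Ket{\alpha^m}$ within its fixed bases while freezing $\Ket{\beta}$, and justifies membership in the $GCH$ class by exactly this fixed-basis-pattern reasoning. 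Note also that the paper's own proof of the theorem takes a different counting route: it observes that each of the $\frac{n}{2}$ CNOTs of the final gate $X_k$ admits at least two input-pair bases consistent with its marked output qubit (e.g.\ a marked target in $\mathcal{C}$ could have come from a $\mathcal{C}$--$\mathcal{C}$ pair or from a GHZ state of size greater than $2$), yielding at least $2^{\frac{n}{2}}$ equally likely candidates for $\Ket{\phi}_{(k-1)}$. Finally, your ``every qubit is hidden'' propagation step is flagged but never carried out; the paper is no more rigorous on that point, so the substantive defect you must fix is the compatibility issue above.
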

\begin{proof}
 Fix $l$, arbitrary such that $1 \leq l \leq n$.
Let $U_l, C_l, X_j, \Ket{\phi}_j$ has the same meaning as before.

For the operation $U_l$, let us analyse the step in the OWF evaluation algorithm, where we choose $\frac{n}{2}$ CNOT operations for the last gate $X_k$, and then mark the qubits from which we chose $(\frac{n}{2} - 1)$ marked qubits to be revealed as $C_l$ in $C$.

The pair of qubits on which the CNOT operations took place in the $k^{th}$ operation, such that at least one of the output qubit were marked, are either of the following forms:
\begin{enumerate}
\item Both were entangled in the same GHZ state, which itself is an element of $\mathcal{B}_m$.
\item Both are in $\mathcal{C}$.
\item Both are in $\mathcal{H}$.
\end{enumerate}
 In the first case, if $m=2$, we mark the control qubit in the output state, else we mark the target qubit. Note that, the target qubit is always transformed into an element in $\mathcal{C}$ after the CNOT operation, but the control qubit is transformed into an element in $\mathcal{H}$ if $m=2$, otherwise remains entangled in a GHZ state belonging to $\mathcal{B}_{m-1}$. Hence the marked qubit in the output state after the $k^{th}$ operation is either an element in $\mathcal{H}$ (if  $m=2$) which was used as the control qubit or an element in $\mathcal{C}$ (for $m > 2$) which was used as the target qubit in the CNOT operations of $X_k$.   

Similarly in the second and third cases, the marked qubits were used as target qubits and control qubits respectively in the CNOT operations. After the CNOT operation, the marked qubits remain to be an element in $\mathcal{C}$ and $\mathcal{H}$ in the second and third cases respectively.

 We know that, in $X_k$ there are $\frac{n}{2}$ CNOT operations and for all but one CNOT operation, there exists a single marked qubit, quantum state of which in the output state is revealed through $C_l$. For the remaining one CNOT operation, none of the output qubits is revealed, hence there can be many possible states of the input qubits, clearly more than 2. 

\paragraph*{}
Hence with the knowledge of the quantum state of the marked qubits (in $C_l$) each of which can either be an element in $\mathcal{C}$ or $\mathcal{H}$, there are at least two possible bases for the input pair of qubits of each of the CNOT operations.  Therefore, by the knowledge of $C_l$,
    $$\# ( \text{quantum states for } \Ket{\phi}_{(k-1)} ) \geq 2^{\frac{n}{2}},$$
each of which are equally likely. Hence there are at least $2^{\frac{n}{2}}$ possible quantum states for $\Ket{\phi}_{(k-1)}$ and no two states out of these $2^{\frac{n}{2}}$ possibilities, have the same POVM for all of their qubits. Since we have constructed the OWF evaluation algorithm in such a way that in each \textbf{OWF unitary operation}, the CNOT gate operations do not nullify each other, one must figure out all the intermediate $\Ket {\phi}_{i}$ $\forall i = 1,2,\ldots,(k-1)$ in order to get $\Ket {\psi}$.

\paragraph*{} Hence, in order to recover the possible quantum state of any qubit of the original input state $\Ket{\psi}$ or even the bases of its qubits, an adversary must scan through all the $2^{\frac{n}{2}}$ possibilities, each of which is equally likely, and hence shall take exponential time. 
\end{proof}

If the adversary randomly guesses the unknown qubits of the output state for the \textbf{OWF unitary operation} $U_l$, then $$ \Pr \left(\text{adversary gets back the correct input state} \Ket{\phi} \right)  \leq \frac{1}{2^{\frac{n}{2}}}$$ which is negligible.

\subsection{Security against random guess}

\paragraph*{} 
\paragraph*{}Another kind of attack that is possible in our one-way function is by random guess of unknown qubits. Note that for each of the $U_l$, by the knowledge of $C_l$ and $F_l$, an adversary can only know about $\frac{n}{2} - 1$ qubits of the output state. Hence, for a particular \textbf{OWF unitary operation} $U_j$, it can guess the quantum state of the unknown $\frac{n}{2} + 1$ qubits to be any $\frac{n}{2} + 1$ qubit state and then invert the gates of $F_j$ to get a state, say $\Ket{\phi}_j$, which might not be same as the original input state $\Ket{\psi}$ but satisfies the equation, 
\begin{equation}\label{eqn_2} 
F_j\Ket{x} = F_j\Ket{\psi}.
\end{equation}
 But in order to have $$\mathcal{F}\Ket{\phi}_l = \mathcal{F}\Ket{\psi}$$ it must satisfy the equation,
\begin{equation} \label{eqn_3}
F_l\Ket{x} = F_l\Ket{\psi}, \quad \forall l =1,2,\ldots,n. 
\end{equation}

But note that, there are infinitely many quantum states possible for the combined state of $n$-qubits. Hence, it does not make sense for an adversary to randomly select from an infinite space. So the best an adversary can do is to select from a finite set of quantum state for a random guess of the unknown qubits of $U_l$. We shall show that the probability of success of such an attack is exponentially low, but before that let's prove some useful lemmas.

 Let $P_j$ be the probability that any randomly chosen $n$-qubit $GCH$ state will satisfy Equation \eqref{eqn_2} and $p_j$ be the probability that any randomly chosen quantum non-$GCH$ state will satisfy Equation \eqref{eqn_2}.
\begin{lemma}\label{lemma_3}
For any \textbf{OWF unitary operation} $U_j$, $p_j = 0$.
\end{lemma}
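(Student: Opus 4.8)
The plan is to describe the set of states passing the test in Equation~\eqref{eqn_2} explicitly, show it is confined to a proper linear subspace of the $2^n$-dimensional state space, and then use the fact that the non-$GCH$ states form a continuum to conclude that this subspace is hit with probability zero. First I would fix the meaning of $F_j\Ket{x}=F_j\Ket{\psi}=C_j$: the map $F_j$ applies the unitary $U_j$ and then measures, in the recorded measurement basis, the $\tfrac n2-1$ qubits whose positions form the measurement set, say $M$, returning the string $C_j$ with certainty exactly when $U_j\Ket{x}$ lies in the eigenspace selected by the outcome $C_j$. Writing $\Ket{C_j}_M$ for the fixed product of $\mathcal{C}$- and $\mathcal{H}$-eigenstates that $C_l$ records at the measured positions, this eigenspace is $W=\Ket{C_j}_M\otimes\mathcal{K}_{M^c}$, where $\mathcal{K}_{M^c}$ is the full Hilbert space of the remaining $|M^c|=\tfrac n2+1$ qubits. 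Hence $\Ket{x}$ satisfies Equation~\eqref{eqn_2} if and only if $\Ket{x}\in V:=U_j^{\dagger}W$.

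Since $U_j$ is unitary, $\dim V=\dim W=2^{|M^c|}=2^{\,n/2+1}$, which is strictly less than $2^n$ for every $n>1$; thus $V$ is a proper subspace, and the collection of unit vectors lying in $V$ is a set of Fubini--Study (equivalently Haar) measure zero inside the sphere of all $n$-qubit pure states. I would stress that this does not make the satisfying set empty: taking $U_j^{\dagger}(\Ket{C_j}_M\otimes\Ket{\eta}_{M^c})$ with $\Ket{\eta}$ non-$GCH$ produces genuine non-$GCH$ states inside $V$. The content of the lemma is therefore purely measure-theoretic, not a non-existence statement.

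This is exactly where the asymmetry between $P_j$ and $p_j$ enters. For a fixed $n$ the $GCH$ states form a \emph{finite} set (finitely many partitions of the positions, together with finitely many basis/GHZ labels), so a proper subspace may still contain a positive fraction of them, which is why $P_j$ need not vanish. A randomly chosen non-$GCH$ state, by contrast, is drawn uniformly from the continuum of unit vectors --- the non-$GCH$ restriction excluding only the finite $GCH$ set, itself a null event --- so the probability of landing in the measure-zero set $V$ is $0$, giving $p_j=0$.

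I expect the main obstacle to be the measure-theoretic bookkeeping rather than any algebra. One must commit to reading Equation~\eqref{eqn_2} as ``$C_j$ is obtained with certainty'': the probabilistic reading would instead single out the \emph{open} set of states with nonzero overlap with $W$, which is emphatically not null and would break the lemma. One must also name the uniform/Haar law on pure states under which ``randomly chosen'' is intended and then invoke the standard fact that a proper subspace is a null set for that law; carrying out that last step cleanly, and separating it from the finite-domain case that governs $P_j$, is the only delicate part of the argument.
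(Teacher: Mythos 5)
Your proposal is correct and takes essentially the same route as the paper: both identify the set of states satisfying Equation~\eqref{eqn_2} as the preimage under $U_j$ of a ``fixed outcome on the measured qubits, arbitrary elsewhere'' slab, use unitarity to see this is a proper subspace of dimension $2^{n/2+1} < 2^n$, and conclude the probability is zero, noting that excluding the finite $GCH$ set from the continuum changes nothing. The only divergence is in justifying the null-set step --- the paper argues informally via a limit of finite-field counting ($|W|/|V| = p^{-(n-m)} \to 0$ as $p \to \infty$), whereas you invoke the standard Fubini--Study/Haar measure-zero fact for proper subspaces, which is the cleaner and more rigorous formulation of the same point.
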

\begin{proof}
 Let $S$ be the set of all $n$-qubit quantum states and let $B$ denotes the set of all $n$-qubit quantum states satisfying Equation \eqref{eqn_2} and let $$M = \left\{\Ket {\psi^{(n)}}_{GCH}\right\}^{\mathsf{c}}.$$
 Thus,
 $$ p_j =  \Pr\left( \Ket{\xi} \in B ~\left.\right|~ \Ket{\xi} \xleftarrow{\$} M\right).$$ Here, $\Ket{\xi} \xleftarrow{\$} M$ denotes that $\Ket{\xi}$ is selected from $M$ at random.
Note that, $$\left.{U_j}\right|_B : B \rightarrow  S$$ as a unitary transformation from a subspace of the $2^n$ dimensional Hilbert space to the Hilbert space.
Also by definition of $B$,
    $\forall \Ket{\xi} \in B$, $U_j\Ket{\xi}$ has the property that it is the tensor of a $2^{(\frac{n}{2} + 1)}$ tuple vector and a $2^{(\frac{n}{2} - 1)}$ tuple vector out of which the later is fixed and the former can be any $2^{(\frac{n}{2} + 1)}$ tuple vector. Hence,
    $$Rank(\left.{U_j}\right|_B) = 2^{(\frac{n}{2} + 1)} < 2^n = dim(S).  \text{\space(assuming $n>2$)}$$
Since $U_j$ is unitary and hence invertible, by {\em rank-nullity} theorem. 
$$ dim(B) =  Rank(\left.{U_j}\right|_B) < dim(S) .$$

Since, the total $2^n$-dimensional Hilbert space is a finite dimensional vector space over the infinite field $\mathbb{C}$.
$$\Pr\left( \Ket{\xi} \in B ~\left.\right|~ \Ket{\xi} \xleftarrow{\$} S\right) = 0.$$ 
In order to see this, consider any finite dimensional vector space $V$ of dim $n$, over a finite field $F$ of size $p = |F|$. Let $W$ be a subspace of $V$ of dimension $m$ ($< n$).
Hence, $|V| = {p}^{n}$ and $|W| = {p}^{m}$.
Let $$P = \Pr\left( x \in W ~\left.\right|~ x \xleftarrow{\$} V\right).$$ Clearly, $$P = \frac{|W|}{|V|} = \frac{1}{p^{(n-m)}}.$$
In our case, we are concerned with a finite dimensional vector space over an infinite field. Hence it suffices to take the limit $p \to \infty$.
Now, $$\lim_{p \to \infty} P =\lim_{p \to \infty} \frac{1}{p^{(n-m)}} = 0 $$ as $n>m$. Hence, $\Pr\left( \Ket{\xi} \in B ~\left.\right|~ \Ket{\xi} \xleftarrow{\$} S\right) = 0.$

In our case, the set $\left\{\Ket {\psi^{(n)}}_{GCH}\right\}$ has finite cardinality whereas the total set $ S$ is an infinite set. Hence, 
$$ \Pr\left( \Ket{\xi} \in B ~\left.\right|~ \Ket{\xi} \xleftarrow{\$} M\right) =  \Pr\left( \Ket{\xi} \in B ~\left.\right|~ \Ket{\xi} \xleftarrow{\$} S\right) = 0.$$ 
\end{proof}

\begin{lemma}\label{lemma_4}
For any \textbf{OWF unitary operation} $U_l$, there can be  $2^{(\frac{n}{2} - 1)}$ possible values of $C_l$ for the same $F_l$ and for each such possible values of $C_l$'s, $\exists \Ket{\phi} \in \left\{\Ket {\psi^{(n)}}_{GCH}\right\}$ such that $F_l\Ket{\phi} = C_l$.
\end{lemma}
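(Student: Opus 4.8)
The plan is to prove the two assertions separately: first a counting argument bounding (indeed determining) the number of possible values of $C_l$, and then a surjectivity argument that leverages the invertibility of $U_l$ together with Lemma~\ref{lemma_1}.

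For the counting, I would observe that $F_l$ fixes, for each of the $(\frac{n}{2}-1)$ measurement positions, a measurement basis that is either $\mathcal{C}$ or $\mathcal{H}$. By the structure of the output state established in the proof of Lemma~\ref{lemma_1}, every marked qubit in a measurement position is an element of $\mathcal{C}$ or $\mathcal{H}$, and each of these bases contains exactly two orthonormal states ($\Ket{0},\Ket{1}$ or $\Ket{+},\Ket{-}$). Since the recorded value at each position ranges independently over these two states, the number of distinct combined states $C_l$ that the $(\frac{n}{2}-1)$ measured qubits can assume is precisely $2^{(\frac{n}{2}-1)}$.

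For the existence part, I would fix an arbitrary one of these $2^{(\frac{n}{2}-1)}$ target values $C_l$ and exhibit a $GCH$ state mapping to it. The key structural fact, again from Lemma~\ref{lemma_1}, is that after the $k^{th}$ gate of $U_l$ the output state factorizes as a tensor product of the marked qubits (each an element of $\mathcal{C}$ or $\mathcal{H}$) with a $(\frac{n}{2}+1)$-qubit $GCH$ state on the remaining positions. Using this separability, I would build a candidate output state $\Ket{\omega}$ by placing at each measurement position the basis element prescribed by $C_l$ (in the basis fixed by $F_l$) and placing at the remaining $\frac{n}{2}+1$ positions any fixed valid $GCH$ configuration. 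Being a tensor product of basis elements with a $GCH$ state, $\Ket{\omega}$ is itself a $GCH$ state. I would then invert: by the concluding part of Lemma~\ref{lemma_1}, $U_l^{\dagger}$ is again a \textbf{OWF unitary operation} and hence maps $GCH$ states to $GCH$ states, so $\Ket{\phi} := U_l^{\dagger}\Ket{\omega}$ lies in $\left\{\Ket{\psi^{(n)}}_{GCH}\right\}$. Since $U_l\Ket{\phi}=\Ket{\omega}$ and the measured qubits of $\Ket{\omega}$ are exactly those recorded in $C_l$, we get $F_l\Ket{\phi}=C_l$. As $C_l$ was arbitrary among the $2^{(\frac{n}{2}-1)}$ values, this settles all of them.

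The main obstacle I anticipate is justifying the freedom to set the measured qubits independently, i.e., that every assignment of the $(\frac{n}{2}-1)$ measured qubits within their fixed bases, tensored with some configuration of the remaining qubits, is genuinely an attainable output of $U_l$. This reduces entirely to the tensor-factorization of the post-termination state, which Lemma~\ref{lemma_1} supplies; once that separability is granted, the invertibility of $U_l$ together with the $GCH$-preservation of $U_l^{\dagger}$ makes the surjectivity immediate.
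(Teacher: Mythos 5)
Your counting argument for the first assertion is fine and coincides with the paper's. The gap is in the existence part, specifically in the step ``$U_l^{\dagger}$ is again a \textbf{OWF unitary operation} and hence maps $GCH$ states to $GCH$ states.'' Being a \textbf{OWF unitary operation} is not an intrinsic property of a gate sequence; it is a property relative to the state it acts on, since each CNOT must act on a \emph{compatible} pair at the moment it is applied, and compatibility is determined by the bases of the two qubits at that moment. The concluding part of Lemma~\ref{lemma_1} only says that $U_l^{\dagger}$ acts as a OWF unitary operation \emph{on the particular state} $U_l\Ket{\psi}$. Your $\Ket{\omega}$ places ``any fixed valid $GCH$ configuration'' on the $\frac{n}{2}+1$ unmeasured positions, so its basis structure can differ from that of $U_l\Ket{\psi}$ (say, a GHZ block where the true output has qubits in $\mathcal{C}$ or $\mathcal{H}$). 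Then some reversed CNOT can hit an incompatible pair, e.g.\ control in $\mathcal{H}$ and target inside a GHZ state, and such a CNOT leaves the $GCH$ family: a CNOT with control $\Ket{+}$ onto one arm of $\frac{1}{\sqrt{2}}(\Ket{00}+\Ket{11})$ gives $\frac{1}{2}(\Ket{000}+\Ket{011}+\Ket{110}+\Ket{101})$, which is not a tensor product of GHZ states and elements of $\mathcal{C}$, $\mathcal{H}$. This situation genuinely occurs: if a termination CNOT of $X_k$ acted on two $\mathcal{H}$ qubits with the control marked, then in $\Ket{\omega}$ the control is forced by $F_l$ to be in $\mathcal{H}$, but you are free to have put the (unmeasured) target into a GHZ block, and the very first reversed gate already destroys $GCH$-ness. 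So $\Ket{\phi}:=U_l^{\dagger}\Ket{\omega}$ need not lie in $\left\{\Ket{\psi^{(n)}}_{GCH}\right\}$, which is exactly the membership the lemma requires you to produce.

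The fix is to anchor the unmeasured positions not to an arbitrary configuration but to $\Ket{\beta}$, the actual state of those qubits in $U_l\Ket{\psi}$ --- this is what the paper does. Then every candidate output $\Ket{\alpha^m}\otimes\Ket{\beta}$ has exactly the same basis structure as $U_l\Ket{\psi}$, differing only in which element of the (fixed) basis occupies each measurement position. Since, as recorded in the proof of Lemma~\ref{lemma_1}, the bases of the outputs of a compatible CNOT depend only on the bases of its inputs, the reversed gates meet compatible pairs on $\Ket{\alpha^m}\otimes\Ket{\beta}$ exactly when they do on $\Ket{\alpha^t}\otimes\Ket{\beta}=U_l\Ket{\psi}$; and for $m=t$ the inversion returns $\Ket{\psi}$ itself, which is $GCH$. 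This transfer argument --- from the one known case to all $2^{\frac{n}{2}-1}$ cases of identical basis structure --- is the step your proof is missing; once the residual state is fixed to $\Ket{\beta}$, your inversion argument goes through.
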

\begin{proof}
Since $F_l$ is fixed, the measurement positions and measurement bases are fixed. The quantum state of some of the qubits in the output state is in $\mathcal{C}$ and some in $\mathcal{H}$. Keeping the basis same, a particular qubit in the measurement position can either be $\Ket{0}$ or $\Ket{1}$ (if the qubit is supposed to be an element in $\mathcal{C}$); or can be $\Ket{+}$ or $\Ket{-}$ (if the qubit is supposed to be an element in $\mathcal{H}$). 

Since there are ${\frac{n}{2} - 1} $ qubits in the measurement positions, there can be $2^{\frac{n}{2} - 1} $ possible combined states of qubits in measurement positions.

Since, $C_l$ is uniquely defined by the quantum states of the qubits in the measurement positions of $U_l$ and vice versa,
\begin{equation}\label{eqn_4} \# (\text{ values of $C_l$}) = 2^{\frac{n}{2} - 1} = s (\text{say}).\end{equation}

Let the quantum states corresponding to all such possible values of $C_l$ be listed as $\Ket{\alpha^1},\Ket{\alpha^2},\ldots,\Ket{\alpha^s}.$ Note that
$$\exists t\text{ such that } {\alpha}^{t} = F_l\Ket{\psi} \text{ where $\Ket{\psi}$ is the original input.}$$

Let the quantum state of the other ${\frac{n}{2} + 1} $ qubits of $U_l\Ket{\psi}$ be represented by ($\Ket{\beta}$).

$ \text{For } m = 1,2,\ldots,s$, construct a quantum state $\Ket{\zeta^m}$ keeping $\Ket{\alpha^m}$ at the measurement position and $\Ket{\beta}$ for the other ${\frac{n}{2} + 1} $ positions.
Hence, up to some rearrangement,
$ \Ket{\zeta^m} = (\Ket{\alpha^m}\otimes\Ket{\beta})$.
Let $$ \Ket{\xi^m} = {U_l}^\dagger \Ket{\zeta^m} .$$
Hence,
\begin{equation}\label{eqn_5} \Ket{\xi^t} = {U_l}^\dagger (\Ket{\alpha^t}\otimes\Ket{\beta}) = \Ket{\psi}\end{equation}
and
$$ U_l\Ket{\xi^m} = \Ket{\zeta^m} = (\Ket{\alpha^m}\otimes\Ket{\beta}) \text{  (up to some rearrangement)}.$$
\begin{equation}\label{eqn_6} \implies F_l\Ket{\xi^m} = {\alpha}^{m}\quad \forall m=1,2,\ldots,s.\end{equation}
Following the proof of Lemma \ref{lemma_1}, we note that, after a \textbf{OWF gate operation} on a $GCH$ state, the basis of the output qubits depends only on the basis of the input qubits. 
Since the measurement basis is fixed and the inverted \textbf{OWF unitary operation} ${U_l}^\dagger$ on $\Ket{\zeta^m}$ ($m = 1,2\ldots s$),  is also a \textbf{OWF unitary operation} we conclude that if  
$$\exists {\alpha}^m \text{ such that } \Ket{\xi^m} \in \left\{\Ket {\psi^{(n)}}_{GCH}\right\} ,$$ then
$$ \Ket{\xi^m}
\in \left\{\Ket {\psi^{(n)}}_{GCH}\right\}    \forall m=1,2,\ldots ,s. $$

But by Eqn \eqref{eqn_5}, $$\Ket{\xi}^t = \Ket{\psi}.$$

Therefore,
\begin{equation}\label{eqn_7}
\Ket{\xi^m} \in \left\{\Ket {\psi^{(n)}}_{GCH}\right\}    \forall m=1,2,\ldots ,s. 
\end{equation}
By equations \eqref{eqn_4}, \eqref{eqn_6} and \eqref{eqn_7}, we are done. 
\end{proof}

\begin{corollary}
For a random $n$-qubit $GCH$ state and a fixed \textbf{OWF unitary operation} $U_l$, there can be at least $2^{\frac{n}{2} - 1}$ distinct possible values of $C_l$.
\end{corollary}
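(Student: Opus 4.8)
The plan is to derive this almost immediately from Lemma~\ref{lemma_4}, which already does the substantive work; the corollary is essentially a re-packaging of that lemma oriented toward the security analysis. First I would recall the two claims bundled in the lemma. For a fixed \textbf{OWF unitary operation} $U_l$ (equivalently, a fixed $F_l$ together with its fixed measurement positions and measurement bases), the measured value $C_l$ is completely determined by the combined state of the $\frac{n}{2}-1$ measured qubits. Since each such qubit, read in its prescribed basis, yields exactly one of two orthogonal outcomes (either $\Ket{0}/\Ket{1}$ when its basis is $\mathcal{C}$, or $\Ket{+}/\Ket{-}$ when its basis is $\mathcal{H}$), there are exactly $s = 2^{\frac{n}{2}-1}$ syntactically possible outcome strings $\alpha^1,\ldots,\alpha^s$. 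This alone only counts what the measurement apparatus could in principle emit; it does not yet guarantee that every such string arises from a legitimate $GCH$ input.

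The key step is to invoke the \emph{realizability} half of Lemma~\ref{lemma_4}: for each index $m$ the lemma exhibits a $GCH$ state $\Ket{\xi^m} = {U_l}^\dagger \Ket{\zeta^m}$ with $F_l\Ket{\xi^m} = \alpha^m$. Thus each of the $s$ potential outcome strings is actually attained by some $n$-qubit $GCH$ state. I would then argue as follows: as the input ranges over the (finite) set $\left\{\Ket{\psi^{(n)}}_{GCH}\right\}$, the image of $F_l$ contains all of $\alpha^1,\ldots,\alpha^s$, and these are pairwise distinct precisely because they correspond to distinct measurement-outcome strings of the $\frac{n}{2}-1$ measured qubits. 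Hence the number of distinct values that $C_l$ can assume over random $GCH$ inputs is at least $s = 2^{\frac{n}{2}-1}$, which is the claimed bound.

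There is no genuine obstacle here, since the corollary is a bookkeeping consequence of Lemma~\ref{lemma_4} (its role is to lower-bound the adversary's uncertainty about $C_l$ for a fixed $F_l$). The only point that deserves a word of care is to keep distinct the two notions of ``possible value'': a value that is \emph{syntactically} producible by the fixed measurement bases, versus a value that is \emph{actually produced} by a valid $GCH$ input. The first observation yields the count $2^{\frac{n}{2}-1}$, and it is the existence clause of Lemma~\ref{lemma_4} that upgrades this into ``at least $2^{\frac{n}{2}-1}$ distinct \emph{attainable} values of $C_l$,'' which is exactly what the corollary asserts.
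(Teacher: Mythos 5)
Your proposal is correct and follows essentially the same route as the paper: both deduce the corollary directly from Lemma~\ref{lemma_4}, using the fact that the lemma exhibits $2^{\frac{n}{2}-1}$ $GCH$ states realizing $2^{\frac{n}{2}-1}$ distinct values of $C_l$ for the fixed $F_l$. Your explicit separation of ``syntactically possible'' versus ``actually attainable'' outcome strings is a slightly more careful articulation of the same argument, not a different approach.
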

\begin{proof}
By the above lemma, keeping $F_l$ fixed, there are $2^{\frac{n}{2} - 1}$ $n$-qubit quantum states which produce $2^{\frac{n}{2} - 1}$ distinct values of $C_l$. Hence, for any random $n$-qubit $GCH$ state, there can be at least $2^{\frac{n}{2} - 1}$ distinct possible values of $C_l$.
\end{proof}
\begin{lemma}\label{lemma_5}
For each of the \textbf{OWF unitary operations} $U_j$, $P_j \leq \frac{1}{2^{\frac{n}{2} - 1}}.$ 
\end{lemma}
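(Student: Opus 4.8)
The plan is to bound $P_j$ by counting how many of the equally likely outcomes of $F_j$ on a random $GCH$ state can coincide with the fixed target value $F_j\Ket{\psi}$. Recall that $P_j = \Pr(F_j\Ket{x} = F_j\Ket{\psi})$ with $\Ket{x}$ drawn uniformly from the $GCH$ states, and that $F_j\Ket{x}$ is completely determined by the states of the $\frac{n}{2}-1$ qubits sitting at the measurement positions of $U_j\Ket{x}$, read out in the recorded measurement basis.

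First I would argue that a necessary condition for $F_j\Ket{x} = F_j\Ket{\psi}$ is that $\Ket{x}$ shares the basis structure of $\Ket{\psi}$ at the relevant positions. By Lemma~\ref{lemma_1}, the basis ($\mathcal{C}$, $\mathcal{H}$ or $\mathcal{G}$) of every output qubit of a \textbf{OWF unitary operation} depends only on the bases of the input qubits and not on their values; hence if $\Ket{x}$ differs from $\Ket{\psi}$ in basis structure, the measurement qubits of $U_j\Ket{x}$ fall (at least partly) outside the recorded measurement basis, the recorded POVM destroys the state, and the outcome cannot deterministically reproduce the recorded $C_j$. So only $GCH$ states with the correct basis structure can contribute to $P_j$, and it suffices to bound the conditional probability within that class.

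Restricting to the correct-basis class, I would invoke Lemma~\ref{lemma_4} together with its corollary: with $F_j$ fixed, each of the $\frac{n}{2}-1$ measurement qubits can independently take either of its two admissible basis states, producing exactly $2^{\frac{n}{2}-1}$ distinct possible values of $C_j$, of which the target $F_j\Ket{\psi}$ is one. The final ingredient is that these $2^{\frac{n}{2}-1}$ values are equally likely under the uniform distribution on $GCH$ states: that distribution is invariant under flipping the label of any computational-basis qubit and under flipping the sign of any Hadamard-basis qubit, and since $U_j$ is unitary (hence a bijection) these symmetries act transitively on the measurement outcomes. Conditioned on the correct basis structure, each value therefore has probability $\frac{1}{2^{\frac{n}{2}-1}}$; since the correct basis structure occurs with probability at most $1$, this yields $P_j \leq \frac{1}{2^{\frac{n}{2}-1}}$.

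The main obstacle is making the equal-likelihood step precise, namely verifying that the symmetries of the uniform $GCH$ distribution genuinely act transitively on the $2^{\frac{n}{2}-1}$ admissible measurement outcomes and that conditioning on basis structure does not skew the induced distribution of the measured values. The remaining points—that wrong-basis inputs contribute nothing to the deterministic equation, and that there are exactly $2^{\frac{n}{2}-1}$ outcomes—follow directly from Lemmas~\ref{lemma_1} and~\ref{lemma_4}, so it is the probabilistic uniformity where the real care is needed.
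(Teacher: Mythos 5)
Your proposal takes essentially the same route as the paper's own proof: both rest on the corollary to Lemma~\ref{lemma_4}, which supplies $2^{\frac{n}{2}-1}$ admissible values of $C_j$, and conclude that a random $GCH$ state matches the one recorded value with probability at most $\frac{1}{2^{\frac{n}{2}-1}}$. The only difference is one of rigor, in your favor: the equal-likelihood step you single out as the main obstacle is silently assumed in the paper, whose entire argument is the passage from ``at least $2^{\frac{n}{2}-1}$ possible values in the measurement position'' to ``$P_j \leq \frac{1}{2^{\frac{n}{2}-1}}$'', so your restriction to the correct basis structure via Lemma~\ref{lemma_1} and your symmetry argument for uniformity of the outcomes fill a gap in the paper's reasoning rather than introduce one.
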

\begin{proof}
 By the previous corollary, for any random $n$-qubit quantum state,
  $$ \# (\text{quantum states of the qubits in the measurement position of } U_j) \geq 2^{\frac{n}{2} - 1}.$$ In order to satisfy Equation \eqref{eqn_2}, the value in the measurement position must be exactly same as $C_j$. Hence, 
 $ P_j \leq \frac{1}{2^{\frac{n}{2} - 1}}. $ 
\end{proof}

\begin{theorem}
For each \textbf{OWF unitary operation} $U_m$, the probability that the state thus constructed by an adversary by randomly guessing the unknown qubits of $U_m$, satisfies Equation \eqref{eqn_3}, is negligible.  
\end{theorem}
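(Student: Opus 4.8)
The plan is to reduce the event ``$\Ket{x}$ satisfies Equation \eqref{eqn_3}'' to a conjunction of independent per-operation events and then bound the joint probability by the product of the single-operation probabilities already controlled in Lemmas \ref{lemma_3} and \ref{lemma_5}. The starting observation is that the adversary's state $\Ket{x}$, obtained by guessing the $\frac n2+1$ unknown output qubits of $U_m$ and inverting the gates of $F_m$, satisfies $F_m\Ket{x}=F_m\Ket{\psi}$ by construction. Hence Equation \eqref{eqn_3} holds if and only if the remaining $n-1$ constraints $F_l\Ket{x}=F_l\Ket{\psi}$ with $l\neq m$ also hold, and it is these that carry all the residual randomness of the attack.

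First I would invoke the independence of the operations $U_1,\ldots,U_n$ asserted in the OWF evaluation algorithm: since each $U_l$ is chosen independently of the others, knowing that $\Ket{x}$ was engineered to reproduce $C_m$ under $U_m$ conveys no information about whether it reproduces $C_l$ under an unrelated $U_l$. This lets me factor the success probability over the adversary's random guess as $\prod_{l\neq m}\Pr\!\left(F_l\Ket{x}=F_l\Ket{\psi}\right)$, reducing the global question to the local bounds already established.

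Next I would bound each factor by a case split on whether the inverted state $\Ket{x}$ is a $GCH$ state. If $\Ket{x}$ is not a $GCH$ state, Lemma \ref{lemma_3} gives $p_l=0$, so such guesses contribute nothing and the adversary can only improve by restricting the guess to the finite set of $GCH$ states (as remarked before the lemmas, selecting from an infinite space is meaningless). For a $GCH$ guess, Lemma \ref{lemma_5} bounds each factor by $P_l\le 2^{-(\frac n2-1)}$. Combining the two cases, $\Pr(\text{Equation }\eqref{eqn_3})\le\left(2^{-(\frac n2-1)}\right)^{n-1}$, which decays faster than the inverse of any polynomial in $n$ and is therefore negligible in the sense of the stated definition.

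The main obstacle I anticipate is justifying rigorously that the single-operation bound $P_l\le 2^{-(\frac n2-1)}$ survives the conditioning on $\Ket{x}\in B_m$ (the set fixed by the $U_m$ constraint); this is precisely where the independence of the $U_l$'s must be used carefully, so that conditioning on the measurement outcome of $U_m$ still leaves the outcome of $U_l$ spread over its $2^{\frac n2-1}$ equally likely values, as guaranteed by the corollary following Lemma \ref{lemma_4}. A secondary technical point is that the probability must be interpreted over the adversary's guess drawn from a finite set, so that the factorization into a product of independent events is well defined and the whole estimate is meaningful.
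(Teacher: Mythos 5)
Your proposal is correct and follows essentially the same route as the paper: you note the $U_m$ constraint holds by construction, factor the remaining $n-1$ constraints using the independence of the $U_l$'s, split on whether the guessed state is a $GCH$ state, and apply Lemma \ref{lemma_3} ($p_j=0$) and Lemma \ref{lemma_5} ($P_j\leq 2^{-(\frac{n}{2}-1)}$) to obtain the bound $\left(2^{-(\frac{n}{2}-1)}\right)^{n-1}$. The paper organizes this identically, via conditional probabilities on $Z$ and $Z^{\mathsf{c}}$ combined by the law of total probability, so there is no substantive difference in approach.
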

\begin{proof}
Let $P^m$ be the probability that the state $\Ket{\phi}_m$ thus constructed by randomly guessing the unknown qubits of $U_m$, satisfies Equation \eqref{eqn_3}. Let $X_m$ be the finite set from which the adversary randomly guesses the unknown qubits and $Y_m$ be the finite set of input states which can be constructed by randomly guessing from $X_m$. Let $\Ket{\phi}_m$ be an arbitrary element of $Y_m$. Let $Z = \left\{\Ket {\psi^{(n)}}_{GCH}\right\}.$

Note that,
\begin{equation*}
\begin{split}
 & \Pr\Bigl(\Ket{\phi}_m \text{ satisfies Eqn. }\eqref{eqn_2} ~\left.\right|~ \Ket{\phi}_m \in Z \Bigr)\\
 & = \Pr\left(\Ket{\phi} \text{ satisfies Eqn. }\eqref{eqn_2} ~\left.\right|~ \Ket{\phi} \xleftarrow{\$} Z, \Ket{\phi} \in Y_m\right).
\end{split}
\end{equation*}

Since the $n$ \textbf{OWF unitary operations} are chosen independently, for $m \neq j$ (the index of the unitary in Equation~\eqref{eqn_2}),
the above probability is equal to
\begin{equation*}
 \Pr\left(\Ket{\phi} \text{ satisfies Eqn. }\eqref{eqn_2} ~\left.\right|~ \Ket{\phi} \xleftarrow{\$} Z \right) = P_j.
\end{equation*}
Similarly,
$$\Pr\Bigl(\Ket{\phi}_m \text{ satisfies Eqn. }\eqref{eqn_2} ~\left.\right|~ \Ket{\phi}_m \in Z^{\mathsf{c}}\Bigr) = p_j.$$
Obviously, for $j = m$,
$$\Pr\Bigl(\Ket{\phi}_m \text{ satisfies Eqn. }\eqref{eqn_2} ~\left.\right|~ \Ket{\phi}_m \in Z\Bigr) = 1,$$
$$\Pr\Bigl(\Ket{\phi}_m \text{ satisfies Eqn. }\eqref{eqn_2} ~\left.\right|~ \Ket{\phi}_m \in Z^{\mathsf{c}}\Bigr) = 1.$$
Again, since the $n$ \textbf{OWF unitary operations} are chosen independently, and by previous discussion,
$$\Pr\Bigl(\Ket{\phi}_m \text{ satisfies Eqn. }\eqref{eqn_3} ~\left.\right|~ \Ket{\phi}_m \in Z\Bigr) = 1.\left({\prod}_{j\neq m} P_j\right). $$
$$\Pr\Bigl(\Ket{\phi}_m \text{ satisfies Eqn. }\eqref{eqn_3} ~\left.\right|~ \Ket{\phi}_m \in Z^{\mathsf{c}}\Bigr) = 1.\left({\prod}_{j\neq m} p_j\right) . $$
Therefore, by Lemma \ref{lemma_3} and Lemma \ref{lemma_5}, we conclude,
$$\Pr\Bigl(\Ket{\phi}_m \text{ satisfies Eqn. }\eqref{eqn_3} ~\left.\right|~ \Ket{\phi}_m \in Z\Bigr) \leq \left(\frac{1}{2^{\frac{n}{2} - 1}}\right)^{(n-1)}. $$

$$\Pr\Bigl(\Ket{\phi}_m \text{ satisfies Eqn. }\eqref{eqn_3} ~\left.\right|~ \Ket{\phi}_m \in Z\Bigr) = 0.$$
Let $$ P^{(m)} = \Pr\Bigl(\Ket{\phi}_m \text{ satisfies Eqn. }\eqref{eqn_3} ~\left.\right|~ \Ket{\phi}_m \in Z\Bigr),$$
  $$ p^{(m)} = \Pr\Bigl(\Ket{\phi}_m \text{ satisfies Eqn. }\eqref{eqn_3} ~\left.\right|~ \Ket{\phi}_m \in Z^{\mathsf{c}}\Bigr).$$
Therefore,
\begin{eqnarray*}
 P^m &=& P^{(m)}\Pr\Bigl(\Ket{\phi}_m \in Z\Bigr) + p^{(m)}\Pr\Bigl(\Ket{\phi}_m \in Z^{\mathsf{c}}\Bigr) \\
     &=& P^{(m)}\Pr\Bigl(\Ket{\phi}_m \in Z\Bigr)     \leq P^{(m)} \leq  \left(\frac{1}{2^{\frac{n}{2} - 1}}\right)^{(n-1)}.\\
\end{eqnarray*}

This concludes our proof.
\end{proof}

Hence,$$ \text{the success probability of forgery by this method} \leq  \left(\frac{1}{2^{\frac{n}{2} - 1}}\right)^{(n-1)},$$ which is negligible.
\section{Application of the proposed OWF  in Quantum Money Schemes}
In the post quantum era, there have been some great works in the field of quantum money. Wiesner's private key quantum scheme was the first of its kind which was proven to be information theoretically secured~\cite{wie78}. Later, Farhi also came up with the quantum money scheme based on knot theory~\cite{farhi}. Also, Aaronson and Christiano proposed the idea of hidden subspace for constructing public key quantum money schemes~\cite{hidsub}. Recently, Panigrahi and Moullick also came up with the idea of quantum cheques using entanglement and quantum teleportation~\cite{qcheq}.

 The classical one-way function had a major contribution in developing the data authentication models in cryptography. Similarly, the concept of quantum to classical one-way function  would enhance the facility of authenticating a quantum state and with the help of secured digital signature schemes, would provide the ability to authenticate any quantum state publicly multiple times. This property is the main reason why the Quantum to Classical OWF  is so useful in the field of quantum money. Our proposed one-way function  is multiple times verifiable, as the verification does not destroy the original quantum state.

  Hence it is of great use in public key quantum money schemes like quantum bitcoins and quantum currencies which demand multiple times verification without destroying the original state. However, it cannot be used in quantum cheques or currency bonds since the multiple times verifiability property can be exploited for double-spending in such schemes. Ideally, any currency bond scheme should be only one time verifiable, in order to avoid double spending.

   Now, we will give explicit constructions of quantum currency and quantum bitcoins scheme based on the proposed one-way function. The security of both the schemes can be explained on the basis of the no-cloning principle~\cite{noclon}, the security of the quantum to classical one-way function  and the security of the digital signatures. 

\subsection{Quantum Currency Notes} In this section, we are going to discuss a centralised yet public-key quantum money scheme called Quantum Currency Notes where a central authority mints the currency notes but these notes are publicly verifiable multiple times. The explicit protocol is shown below.
\subsubsection{Minting Algorithm}
In the minting session, the central authority or Bank mints the quantum currency notes. The algorithm followed is listed below.
\begin{enumerate}
\item The Bank first performs the key generation algorithm of the digital signature scheme and produces the key pair $(K_{priv}, K_{pub})$. The key $K_{pub}$ is made public. 
\item The Bank creates a random quantum GCH state $\Ket {\psi}$ and runs the OWF evaluation algorithm and selects a suitable one-way function $F$ during the execution and produces the classical image $L$ of the quantum state $\Ket {\psi}$ under the chosen one-way function $F$.
\item Let the worth value of the quantum currency note be represented classically by $\mathcal{M}$. The Bank signs the classical string $I = (\mathcal{M} \| L\|F)$ with its private key $K_{priv}$ to get the signature state $\sigma$. The state $\$ = (\Ket {\psi}, \mathcal{M}, L, F, \sigma)$ is the quantum currency note that the Bank gives to the user. 
\end{enumerate}

\subsubsection{Verification Algorithm}
Before accepting the quantum currency note, the user performs the following algorithm. 
\begin{enumerate}
\item The user after receiving the state verifies that the state is of the form  $
(\Ket {\psi}, \mathcal{M}, L, F, \sigma)$ or not. 
\item Then the user recovers the classical text $I = (\mathcal{M} \| L\|F)$. After that, with the help of $K_{public}$, the user verifies whether the signature state $\sigma$ matches with the message text $I$ or not. If the signature state $\sigma$ does not match with $I$, then the user aborts the verification and declares the quantum currency note to be invalid. 
\item Then the user runs the OWF Verification Algorithm on the state $\Ket {\psi}$ with $F$ as the one-way function description and $L$ as the one-way function image and thus checks whether the three match or not. If they do not pass the OWF Verification Algorithm, the user aborts the verification and the currency note is declared to be invalid. 
\end{enumerate}

If and only if all the three steps are passed successfully, the user accepts the quantum currency note and the verification is said to be complete. 

It can be seen that this scheme is secured on the basis of no-cloning theorem~\cite{noclon} and the security of the digital signatures and the secured Quantum to Classical OWF. The main use of the no-cloning theorem~\cite{noclon} in the entire protocol is to ensure that anybody having the currency note consisting of a quantum state (unknown to him) and some classical data, cannot copy or know the quantum state and hence cannot reproduce the currency note and therefore cannot do forgery.
\subsection{Quantum Bitcoins}
Ever since its very origin, the popularity of bitcoins have gathered a lot of pace across the globe for its decentralised network and anonymity of transactions. Being a public-key quantum money, the quantum bitcoin scheme demands multiple times verifiability. Hence, the quantum to classical one-way function  is an ideal tool for such a scheme. 
                       
    In recent times, some schemes have been devised on quantum bitcoins, that have the advantage that unlike the classical bitcoins the transactions are not required to be recorded~\cite{qbitcoin} and thereby saving on a lot of space and time. In our quantum bitcoin protocol, at the mining session two miners are chosen randomly which can be chosen by the system itself by a lottery or through a competition as it is done in case of the real bitcoins just for minting the bitcoins. The concept of incentives for the miners is still valid for our scheme. The minting and verification algorithms are as follows.
\subsubsection{Minting Algorithm}
Let Alice and Bob be the miners selected for the minting session who mint a bitcoin and stores some of its verification tools in the public classical blockchain ledger. \newline
The steps of the minting algorithm are as follows.
\begin{enumerate}
\item At first, Bob runs the Key generation algorithm of digital signatures to produce its own pair of private-public keys $(K^{*}_{priv}, K^{*}_{pub})$ and sends 
$K^{*}_{pub}$ to Alice. Alice also runs key generation algorithm to produce its own private key and public key $(K_{priv}, K_{pub})$. 
\item Then, Bob prepares two copies of a random $n$-qubit quantum GCH state and sends one of the copies to Alice. Let the state be of the form $\Ket {K}$. 
\item Alice applies a random quantum gate $G$ (which can be realised through polynomially many universal gates) on its copy of $\Ket {K}$ to produce the state
$\Ket {K^{*}} = G\Ket {K}$, a randomised state and tags with a serial number $S_{r}$. Let the description of the gate be given by $G$, a classical information. Then Alice signs on the classical information $S_{r}\|G$ with its $K_{priv}$ to get the signature state $\sigma_{1}$ and keeps the corresponding $K_{pub}$ with herself. She thus forms the token $\$_{1} = (S_{r}, G, \Ket {K^{*}}, \sigma_{1}$). 
\item Bob runs the OWF evaluation algorithm during which it selects a suitable Quantum to Classical one-way function $F$ and computes the classical image $L$ of the quantum state under the chosen one-way Function i.e., $$L = F(\Ket {K}).$$ He signs the classical information $(F\|L)$ with the help of $K^{*}_{priv}$ and produces the signature state $\sigma_{2}$. He thus forms the token $\$_{2} = (\Ket {K}, L, F, \sigma_{2}$) and sends it to Alice. 
\item Alice stores $S_{r}$ and the corresponding $K_{pub}$ and $K^{*}_{pub}$ in her ledger which she does not disclose immediately. 
\item The whole token $\$ = (\$_{1}, \$_{2})$ is the quantum bitcoin. 
After making $m$ such bitcoins Alice uploads the $m$ serial numbers and the respective information into the public classical blockchain ledger and releases the quantum bitcoins to the users each of which verifies it before accepting it. 
\end{enumerate}
		
\subsubsection{Verification Algorithm}
Before accepting a bitcoin a user shall run a verification protocol which can be summed up in the following steps:
\begin{enumerate}
\item After receiving the bitcoin \$ at first the user checks whether the bitcoin is of the valid form or not, i. e., it checks whether $\$ = (\$_{1}, \$_{2})$ or not and whether $\$_{1} = (S_{r}, G, \Ket{K^{*}}, \sigma_{1})$ or not and whether $\$_{2} = (\Ket {K}, L, F, \sigma_{2})$. If not, it aborts the verification. 
\item Then with the help of the $K^{*}_{pub}$, the user verifies whether the signature state $\sigma_2$ and the classical information $(F\|L)$ matches or not. Similarly, with the help of the $K_{pub}$, the user verifies whether the signature state $\sigma_1$ and the classical information $(S_{r}\|G)$ matches or not. If any of the two do not match, the verification is aborted.
\item After that with the help of the classical information $F$ and $L$, it runs the one-way function verification algorithm on the state $\Ket {K}$  with  $F$ as the one-way function description and $L$ as the one-way function image. If the one-way function  verification fails, the user aborts the verification.  
\item Then from the knowledge of the classical information $G$, it applies the quantum gate $G$ on the state $\Ket {K}$ to get the state $\Ket {K'} = G\Ket {K}$ and verifies whether $\Ket {K'} = \Ket {K^{*}}$ by the help of non-destructive comparison of states. If they do not match, the user aborts the verification. 
\end{enumerate}
The user concludes it to be a valid bitcoin and accepts it, if and only if the bitcoin passes all the four verification steps. 

\paragraph*{}Our scheme is secured under reuse attack, mentioned in the previous quantum bitcoins protocol~\cite{qbitcoin}, i.e., the same exact quantum bitcoin cannot be minted by the previous miners in order to cause inflation. Bob cannot know which bitcoin has he minted as the serial numbers are not disclosed to him, whereas Alice does not know the random state Bob had prepared during minting. So as long as they do not work together they cannot launch the re-use attack and since the miners are randomly chosen, it is very less probable that both of them would collaborate to launch the attack.

In any bitcoin scheme, minting is done only once for a definite bitcoin but the verification is done every time after a transaction is done and hence verification should be fast and simple. Our scheme offers a quite fast and simple verification procedure, as both OWF image verification and verification of the digital signatures take polynomial time.

                            As mentioned earlier, the security of our scheme relies on the no-cloning property~\cite{noclon}, the security of the digital signatures and the Quantum to Classical OWF. 
 
\section{Conclusion \& Future Works}
We have put forward the concept of Quantum to Classical OWF that is very useful in authenticating a particular kind of quantum states. Although the scheme seems analogous to the classical signature schemes, it is more secure in some sense due to no-cloning property~\cite{noclon} which ensures that it is not possible for any counterfeiter to copy unknown quantum states. Hence, it is going to be very useful in money schemes as shown above. In fact, almost any classical money scheme can be modelled into a quantum money scheme with the help of a Quantum to Classical OWF. Just like digital signatures help to authenticate classical strings, the Quantum to Classical OWF  allows one to authenticate quantum states. Of course, the scheme in the current form is theoretical and practical implementation of such a scheme brings in many other challenges like decoherence and Quantum error-correction.

Also in our scheme, for each OWF unitary operation we have revealed the classical measured state of half the qubits. Although it might seem that, if we decrease the number of qubits to be revealed, the security of the scheme would be enhanced, but this is not entirely true. It is true that by doing that, it would be harder to recover the qubits of the original input by inversion of gates but at the same time there will be an increase in the probability of success for an adversary to construct a quantum state, capable of passing the OWF verification algorithm. Thus, it is a good question to ask what is the optimal strategy or optimal number of qubits to be revealed in the measurement position of each of the OWF unitary operations.

Another thing to notice in our scheme is that it does not allow authentication for an arbitrary quantum state. This is basically because there is no known way to simulate any given unitary operation through polynomially many quantum universal gates. A natural question to ask is whether it is possible to somehow modify the scheme to include more quantum states? Further study is required to address such interesting but open questions.

\end{document}